\documentclass[aps, preprint]{revtex4-2}

\usepackage[utf8]{inputenc}
\usepackage[T1]{fontenc}
\usepackage{amsmath}
\usepackage{amsthm}
\usepackage{amssymb}
\usepackage{hyperref}
\usepackage{dsfont}
\usepackage{graphicx}
\usepackage{enumerate}
\usepackage{tensor}

\hypersetup{colorlinks=true, linkcolor=blue, citecolor=blue}


\newcommand{\Real}{\mathbb{R}}

\newcommand{\Comp}{\mathbb{C}}
\newcommand{\seq}{\subseteq}
\newcommand{\pspace}{\Real^3 \setminus \{0\}}

\newcommand{\Lightcone}{\mathcal{L}_+}

\newcommand{\Poincare}{Poincar\'{e} }
\newcommand{\SO}{\mathrm{SO}}
\newcommand{\so}{\mathfrak{so}}

\newcommand{\ISO}{\mathrm{ISO}}

\newcommand{\khat}{\hat{\boldsymbol{k}}}
\newcommand{\vhat}{\hat{\boldsymbol{v}}}

\newcommand{\mbf}[1]{\boldsymbol{#1}}
\newcommand{\DR}{\cx^{R}}
\newcommand{\DK}{\cx^{K}}

\newcommand{\Df}{\cx^{f}}
\newcommand{\Dpm}{\cx^{\pm}}
\newcommand{\Dp}{\cx^{+}}
\newcommand{\Dm}{\cx^{-}}
\newcommand{\barDK}{\bar{\mbf{D}}^{K}}
\newcommand{\barDR}{\bar{\mbf{D}}^{R}}
\newcommand{\QR}{\mbf{Q}^{R}}
\newcommand{\QK}{\mbf{Q}^{K}}

\newcommand{\cx}{\boldsymbol{D}} 
\newcommand{\vf}[1]{\boldsymbol{#1}} 
\newcommand{\secip}[2]{\langle #1 , #2 \rangle_{s}} 
\newcommand{\kvec}{\mbf{k}} 
\newcommand{\ie}{\emph{i.e.}} 
\newcommand{\uv}[1]{\mbf{#1}} 
\newcommand{\phat}{\hat{\mbf{P}}}
\newcommand{\kmag}{|\mbf{k}|}
\newcommand{\ek}{\uv{e}_k}
\newcommand{\etheta}{\uv{e}_\theta}
\newcommand{\ephi}{\uv{e}_\phi}


\newcommand{\LD}{L^{\mbf{D}}}
\newcommand{\LDbold}{\mbf{L}^{\cx}}
\newcommand{\SD}{S^{\cx}}
\newcommand{\SDbold}{\mbf{S}^{\cx}}
\newcommand{\sop}[1]{#1} 
\newcommand{\svop}[1]{\mbf{#1}} 
\newcommand{\phats}{\hat{\mbf{P}}}

\newtheorem{theorem}{Theorem}

\newtheorem{definition}{Definition}
\newtheorem{proposition}{Proposition}

\begin{document}

\title{Zero curvature is a necessary and sufficient condition for a spin-orbital decomposition}
\date{\today}

\author{Eric Palmerduca}
\email{ep11@princeton.edu}
\affiliation{Department of Astrophysical Sciences, Princeton University, Princeton, New Jersey 08544, USA}
\affiliation{Princeton Plasma Physics Laboratory, Princeton, NJ 08543,
USA}

\author{Hong Qin}
\email{hongqin@princeton.edu}
\affiliation{Department of Astrophysical Sciences, Princeton University, Princeton, New Jersey 08544, USA}
\affiliation{Princeton Plasma Physics Laboratory, Princeton, NJ 08543,
USA}

\begin{abstract}
There has been an extended debate regarding the existence of a spin-orbital decomposition of the angular momentum of photons and other massless particles. It was recently shown that there are both geometric and topological obstructions preventing any such decomposition. Here we show that any geometric connection on a particle's state space induces a splitting of the angular momentum into two operators. These operators are well-defined angular momentum operators if and only if the connection has zero curvature. Massive particles have two canonical curved connections corresponding to boosts and rotations, respectively. These can be uniquely combined to produce a flat connection, and this gives a novel derivation of the Newton-Wigner position operator and the corresponding spin and orbital angular momenta for relativistic massive particles. When the mass is taken to zero, transverse boosts and rotations degenerate, leaving only a single connection for massless particles. This connection produces a commonly proposed splitting of the massless angular momentum into two operators. However, the connection is not flat, explaining why these operators do not satisfy the angular momentum commutation relations and are thus not true spin and orbital angular momentum operators.
\end{abstract}

\maketitle

\section{Introduction}

There has been an extended controversy \cite{Akhiezer1965,Jaffe1990,VanEnk1994_EPL_1,VanEnk1994_JMO_2,Chen2008,Wakamatsu2010,Bliokh2010,Bialynicki-Birula2011,Leader2013,Leader2014,Leader2016,Leader2019,Yang2022, PalmerducaQin_PT, Das2024, PalmerducaQin_GT, PalmerducaQin_SAMOAM} surrounding a basic question: can the angular momentum operator of massless particles, such as photons and gravitons, be split into well-defined spin and orbital components? This can of course be done for massive particles, but issues immediately arise in the massless case. For massive particles, the internal (polarization) and external (momentum) degrees of freedom (DOFs) are independently rotationally symmetric, and the generators of these $\SO(3)$ symmetries are the spin (SAM) and orbital angular momentum (OAM) operators, respectively \cite{Terno2003}. For massless particles, the situation is different, as the internal and external DOFs are coupled \cite{PalmerducaQin_SAMOAM}. For example, a photon's momentum $\kvec$ and polarization $\mbf{E}$ must satisfy the transversality constraint $\kvec \cdot \mbf{E} = 0$. $\kvec$ and $\mbf{E}$ can be simultaneously rotated, and the generator of this symmetry is the total angular momentum operator $\mbf{J} = (J_1,J_2,J_3)$. However, $\kvec$ and $\mbf{E}$ cannot be independently rotated without violating the transversality constraint, that is, without creating unphysical states. This shows that the most straightforward attempt at an SAM-OAM decomposition fails. It has been suggested that instead the decomposition for massless particles should be \cite{Bliokh2010,Bialynicki-Birula2011}
\begin{subequations}\label{eq:massless_splitting}
\begin{align}
    \mbf{J} &= \mbf{J}_\parallel + \mbf{J}_\perp \label{eq:helicity_decomp} \\
    \mbf{J}_\parallel &= (\mbf{J}\cdot \phat) \phat = \chi\phat \label{eq:spin_attempt}\\
    \svop{J}_\perp &= -\frac{1}{H}\mbf{P} \times \mbf{K}  \label{eq:orbital_attempt}
\end{align}
\end{subequations}
with $\mbf{J}_\parallel$ and $\mbf{J}_\perp$ being the SAM and OAM operators, respectively. Here $\mbf{P}$ is the momentum operator, $\mbf{K}$ is the generator of boosts, $H$ is the Hamiltonian, $\phat \doteq \mbf{P}/|\mbf{P}|$, and $\chi \doteq \mbf{J}\cdot \phat $ is the helicity operator. $\mbf{J}_\parallel$ and $\mbf{J}_\perp$ are well-defined vector operators, that is,
\begin{align}
    [J_{\parallel,a},J_b] &= i\epsilon_{abc}J_{\parallel,c} \\
    [J_{\perp,a},J_b] &= i\epsilon_{abc}J_{\perp,c}.
\end{align}
Furthermore, these operators are defined purely in terms of \Poincare generators, so the expression is coordinate independent. However, there is the fundamental issue that $\mbf{J}_\parallel$ and $\mbf{J}_\perp$ are not actually angular momentum operators. Indeed, they do not generate rotational $\SO(3)$ symmetries, or equivalently, they do not satisfy the $\so(3)$ angular momentum commutation relations:
\begin{gather}
    [J_{\parallel,a},J_{\parallel_b}] = 0 \neq i\epsilon_{abc}J_{\parallel,c} \\
    [J_{\perp,a}, J_{\perp,b}] = i\epsilon_{abc}(J_{\perp,c} - J_{\parallel,c})\neq i\epsilon_{abc}J_{\perp,c}.\label{eq:J_perp_comm}
\end{gather}
This issue was first observed by van Enk and Nienhuis \cite{VanEnk1994_EPL_1}. It was recently shown that it is in fact impossible to produce a genuine SAM-OAM splitting for massless particles, as the internal and external spaces do not permit any $\SO(3)$ symmetries \cite{PalmerducaQin_SAMOAM}. Any attempted splitting will result in operators which are either not gauge invariant or else fail to satisfy angular momentum commutation relations. Given that $\mbf{J}_\parallel$ and $\mbf{J}_\perp$ fail to be angular momentum operators, this splitting appears ad hoc. We note that the helicity operator $\chi = \mbf{J} \cdot \phat$ has a clear theoretically significance. Indeed, $\chi$ is the massless analog of the massive spin operator $\mbf{S}$ in Wigner's classification of particles, generating the massless little group $\SO(2)$ \footnote{Technically, the massless little group is the two dimensional Euclidean group $\ISO(2)$. However, if one considers only particles with a finite number of internal DOFs (which includes all particles in the Standard Model), then the inhomogeneous elements of $\ISO(2)$ act by the identity \cite{Weinberg1995}, so that the effective action is that of $\SO(2)$.} in a similar way that $\mbf{S}$ generates the massive little group $\SO(3)$ \cite{Wigner1939,Weinberg1995,PalmerducaQin_PT,PalmerducaQin_SAMOAM}. However, it is the helicity $\chi$, not its ``vectorization'' $\mbf{J}_\parallel = \chi\phat$, which shows up in Wigner's method. One of the main goals of this article is to show that the massless $\mbf{J}_\parallel$, $\mbf{J}_\perp$ splitting does in fact result from a general geometric construction for particles, albeit one that is different from Wigner's little group construction.

To understand this, we recall some of the subtleties of the SAM-OAM splitting for relativistic massive particles. Wigner showed that massive particles are classified by their spin $s$ and that their internal polarization space has dimension $2s+1$ (neglecting DOFs describing non-spacetime symmetries, such as color charge) \cite{Wigner1939,Weinberg1995}. It is then possible to choose coordinates such that \cite{Terno2003}
\begin{equation}\label{eq:massive_splitting}
    \mbf{J} = -i\mbf{P}\times \mbf{Q}^{NW} + \mbf{S}.
\end{equation}
In this coordinate representation, $\mbf{J}$ acts on $2s+1$ component wave functions in $\kvec$ space, $\mbf{S}$ acts as the $(2s+1) \times (2s+1)$ spin matrices, and $\mbf{Q}^{NW}$ acts as $i\nabla$, the component-wise gradient in $\kvec$ space. It is important to note that this simple description of $\mbf{S}$ and $\mbf{Q}^{NW}$ is only valid in a very specific set of coordinates \cite{Jordan1980} and it hides considerable complexity. Indeed, the coordinate independent descriptions of $\mbf{S}$ and $\mbf{Q}^{NW}$ are much more complicated \cite{Jordan1980}:
\begin{align}
    \mbf{S} &= \frac{1}{m}(H\mbf{J} + \mbf{P}\times \mbf{K}) -\frac{1}{m(H+m)}(\mbf{P}\cdot\mbf{J})\mbf{P} \\
    \mbf{Q}^{NW} &= \frac{1}{H}\Big(\mbf{K} - \frac{\mbf{P}}{2H}\Big) - \frac{1}{mH(H+m)}\mbf{P} \times(H\mbf{J} + \mbf{P}\times \mbf{K}) \label{eq:NW_Jordan}.
\end{align}
$\mbf{Q}^{NW}$ is known as the Newton-Wigner position operator. By choosing an internal basis of eigenvectors of $S_3$, one obtains the simpler coordinate representation in terms of spin matrices and gradients \cite{Jordan1980}. The operator $\mbf{Q}^{NW}$ was defined by Newton and Wigner in their search for a relativistic position operator for massless particles \cite{Newton1949}, and the coordinate independent form (\ref{eq:NW_Jordan}) was derived by Jordan \cite{Jordan1980} using the Pauli-Lubanski vector. In this article we will present a novel and physically intuitive construction of $\mbf{Q}^{NW}$ and the relativistic SAM-OAM splitting (\ref{eq:massive_splitting}) based on the theory of connections on vector bundles. What is particularly important about this construction is that it also produces the splitting (\ref{eq:massless_splitting}) when applied to massless particles. This shows that a single geometric procedure results in both the known massive and massless splittings, and particularly that the latter is not ad hoc. Furthermore, we will see the precise singularity that causes the massless operators to fail to be true angular momentum operators.

In Sec. \ref{sec:Formalism} we will review the vector bundle description of particles. In Sec. \ref{sec:Connection_induced_splitting} we show how connections on vector bundles induces splittings of $\mbf{J}$, but that only flat connections induce SAM-OAM decompositions. In Sec. \ref{sec:Applications} we explore specific examples of such splittings for elementary particles induced by \Poincare symmetry, showing that massive particles admit a unique SAM-OAM splitting which is obstructed when the mass is taken to zero.

\section{The vector bundle description of particles}\label{sec:Formalism}
We will use the vector bundle description of massless particles \cite{Simms1968, PalmerducaQin_PT,PalmerducaQin_GT,PalmerducaQin_helicity,PalmerducaQin_SAMOAM,Dragon2024} throughout this paper, particularly following the conventions of Refs.
\cite{PalmerducaQin_PT,PalmerducaQin_GT,PalmerducaQin_helicity,PalmerducaQin_SAMOAM}. A basis for the states of an elementary particle are given by the momentum eigenstates $(k,v)$ where $k^\mu = (\omega,\kvec)$ is the $4$-momentum and $v$ is the internal polarization vector \cite{Weinberg1995}. $\omega = \sqrt{m^2 + \kmag^2}$ for a mass $m$ particle, so $k^\mu$ is restricted to the mass $m$ hyperboloid $\mathcal{M}_m$ for massive particles or to the (forward) lightcone $\Lightcone$ for massless particles. In either case, the momentum can be labeled just by $\kvec$, so states can also be written as $(\kvec,v)$. Note that with this parameterization $\mathcal{M}_m \cong \Real^3$ and $\Lightcone \cong \pspace$; the origin is removed in the latter case because massless particles have no rest frame. The collection of all states $(\kvec_0,v)$ with fixed momentum $\kvec_0$ forms a vector spaces denoted $E(\kvec_0)$. Thus the collection of all states, denoted $E$, is a family of vector spaces parameterized by $M = \mathcal{M}_m$ or $M = \Lightcone$. Such a parameterized collection of vector spaces $\pi:E\rightarrow M$ is called a vector bundle with base manifold $M$. $\pi$ is the canonical projection $\pi(\kvec,v) = \kvec$ and the rank $r$ of the vector bundle is the dimension of the vector spaces $E(\kvec)$. We will follow the standard convention of referring to both $E$ and the triple $\pi:E\rightarrow M$ as a vector bundle. Vector bundles are one of the principle objects of study in topology and differential geometry, and thus this description of particles allows access to many established techniques from these fields. Additional information on vector bundle techniques can be found in Refs. \cite{Atiyah_K_Theory, Tu2017differential}. The vector space $E(\kvec_0)$ is called the fiber at $\kvec_0$. A section $\psi:M\rightarrow E$ is a choice of vector in each fiber, \ie, $\psi(\kvec) \in E(\kvec)$. A section describes a superposition of momentum eigenfunctions and is called a wave function in physics terminology. The space of wave functions is denoted $L^2(E)$. Each fiber $E(\kvec)$ is actually an inner product space, that is, $\langle (\kvec,v_1),(\kvec,v_2)\rangle$ is well-defined. This makes $E$ into a Hermitian vector bundle. This induces an inner product on $L^2(E)$:
\begin{equation}
    \secip{\psi_1}{\psi_2} = \int \frac{d^3k}{\omega} \langle \psi_1(\kvec),\psi_2(\kvec)\rangle
\end{equation}
where $\omega^{-1}d^3k$ is the Lorentz invariant measure. As suggested by the notation $L^2(E)$, we consider only square integrable sections with respect to this measure.

Elementary particles correspond to unitary irreducible representations (UIRs) of the (proper orthochronous) \Poincare group $\ISO^+(3,1)$, which consists of all combinations of boosts, rotations, and spacetime translations \cite{Wigner1939,Weinberg1995}. Thus, there is a unitary irreducible action $\Sigma$ on the states in $E$. For simplicity, we will restrict our discussion to bosons since our focus is on massless particles and there are no known massless fermions. Thus, we will not consider projective representations. For any $\Lambda \in \ISO^+(3,1)$, $\Sigma_\Lambda$ is a vector bundle isomorphism on $E$, linearly mapping a fiber $E(k)$ to $E(\Lambda k)$, where $\Lambda k$ denotes the standard \Poincare action on momentum 4-vectors. That is,
\begin{equation}
    \Sigma_\Lambda(k,v) = (\Lambda k, \Sigma_{\Lambda}v).
\end{equation}
$\Sigma$ is unitary in that it preserves the Hermitian product between vectors. It is also irreducible, meaning there are no proper subbundles of $E$ (with nonzero rank) which are also unitary representations. This makes $E$ into a vector bundle UIR of $\ISO^+(3,1)$. The bundle representation induces a corresponding vector space UIR $\tilde{\Sigma}$ on the space of wave function $L^2(E)$, namely, \cite{Simms1968,PalmerducaQin_PT}
\begin{equation}
    (\tilde{\Sigma}_{\Lambda}\psi)(\kvec) = \Sigma_{\Lambda}[\psi(\Sigma_{\Lambda^{-1}}\kvec)].
\end{equation}
The generators of $\tilde{\Sigma}$ are the Hamiltonian $\sop{H}$, momentum $\svop{P} = (\sop{P}_1, \sop{P}_2,\sop{P}_3)$, angular momentum $\svop{J}$, and boost $\svop{K}$ operators. In particular,
\begin{align}
    \sop{J}_a\psi &= i\frac{d}{dt}\Big|_{t=0} \tilde{\Sigma}_{R_a(t)}\psi \\
    \sop{K}_a\psi &= -i\frac{d}{dt}\Big|_{t=0} \tilde{\Sigma}_{\Lambda_a(t)}\psi \\
\end{align}
where $R_a(t) \in \SO(3) \seq\SO^+(3,1)$ is a rotation by $t$ about the $a$ axis and $\Lambda_a(t)$ is a boost in direction $\uv{e}_a$ by velocity $t$. Similarly, $\sop{H}$ and $\svop{P}$ are the generators of time and space translations and
\begin{align}
    \svop{P}\psi(\kvec) &= \kvec\psi(\kvec) \\
    \sop{H}\psi(\kvec) &= \sqrt{m^2 + \kmag^2}\psi(\kvec).
\end{align}
The generators satisfy the \Poincare algebra commutation relations given in Eq. (\ref{eq:Poincare_algebra}) in the  Appendix, and in particular, the angular momentum satisfies the $\so(3)$ relations
\begin{equation}
    [\sop{J}_a,\sop{J}_b] = i\epsilon_{abc}\sop{J}_c.
\end{equation}

Using Wigner's little group method \cite{Wigner1939,Weinberg1995}, the elementary particle bundles can be classified \cite{PalmerducaQin_PT,PalmerducaQin_helicity}.
The massive bundle representation are labeled by their spin $s$ and are isomorphic to the rank $2s+1$ trivial bundles $\pi:\mathcal{M}_m \times \Comp^{2s+1} \rightarrow\mathcal{M}_m$. Massless bosons, on the other hand, are all line bundles (rank 1 bundles) over the lightcone and are distinguished by a difference in topology. They are labeled by their integer helicity $h$ and are isomorphic to line bundles $\pi:\gamma_h\rightarrow \Lightcone \cong\pspace$. Line bundles over $\Lightcone$ are completely characterized by their (first) Chern number $C$, and the $\gamma_h$ can be abstractly described as the line bundles with $C(\gamma_h) = -2h$. $C$ can be considered as a measure of how topologically nontrivial or globally ``twisted'' the state space is \cite{Bott2013}. The $\gamma_h$ can also be explicitly constructed as tensor products of the $R$ and $L$ photon bundles, which are the bundles of solutions of Maxwell's equations in vacuum \cite{PalmerducaQin_helicity}. In this article we will only need the fact that massless particles with nonzero helicity have nonvanishing Chern number.

\section{Connection induced angular momentum decompositions}\label{sec:Connection_induced_splitting}
Our goal is to decompose the angular momentum operator into two vector operators
\begin{equation}
    \svop{J} = \svop{L} + \svop{S}.
\end{equation}
That $\svop{L}$ and $\svop{S}$ are vector operators just means that they rotate as vectors under $\SO(3) \seq \ISO^+(3,1)$, or equivalently, that they satisfy \cite{Hall2013}
\begin{align}
    [\sop{L}_a,\sop{J}_b] &= i\epsilon_{abc}\sop{L}_C \\
    [\sop{S}_a,\sop{J}_b] &= i\epsilon_{abc}\sop{S}_C.
\end{align}
Ideally, we would like $\svop{L}$ and $\svop{S}$ to be angular momentum operators, that is, to satisfy the $\so(3)$ angular momentum commutation relations
\begin{align}
    [\sop{L}_a,\sop{L}_b] &= i\epsilon_{abc}\sop{L}_c, \\
    [\sop{S}_a,\sop{S}_b] &= i\epsilon_{abc}\sop{S}_c.
\end{align}

In non-relativistic quantum mechanics, the orbital angular momentum operator is given in the momentum basis by $-i\kvec \times \nabla$ where $\nabla$ is the momentum space gradient on scalar functions. To generalize this to current setting of vector bundle representations of particles, we must find a suitable replacement for the operator $\nabla$. However, $\nabla$ is not uniquely defined on sections of vector bundles even when the base manifold $M$ is identified with an open subset of $\Real^3$ as in $\mathcal{M}_m \cong \Real^3$ or $\Lightcone \cong \pspace$ \cite{Tu2017differential}. The core issue is that there is not a unique way to identify the vectors at different base points in $M$ (even if one specifies a path on $M$ between the points). As such, there is ambiguity in how one measures the change in a wave function $\psi(\kvec) \in L^2(E)$ as $\kvec$ is varied. It turns out that there are many inequivalent ways to remove this ambiguity, and each describes a connection $\cx$ on the vector bundle $E$. Such connections can be thought of as covariant derivatives, and indeed, they generalize the affine connections that occur in general relativity; a detailed treatment of connections on vector bundles can be found in Ref. \cite{Tu2017differential}. The essential program will be to replace $\nabla$ with connections on vector bundles and study the resultant splittings of the angular momentum. Technically, a connection is defined as follows:
\begin{definition}[\cite{Tu2017differential}]
    Let $\mathfrak{X}(M)$ denote the space of vector fields $X:M\rightarrow TM$ on $M$. A connection $\cx$ on $\pi:E\rightarrow M$ is a map
    \begin{gather}
        \cx: \mathfrak{X}(M) \times L^2(E) \rightarrow L^2(E) \nonumber\\
        (X,\psi) \mapsto \cx_X \psi\nonumber
    \end{gather}
    such that for any smooth function $f:M\rightarrow \Comp$,  $\cx_X \psi$ is $f$-linear in $X$ and satisfies the Leibniz rule in $\psi$:
    \begin{gather}
        \cx_{fX}(\psi) = f\cx_X\psi \\
        \cx_X(f\psi) = df(X)\psi + f \cx_X \psi \label{eq:Leibniz}
    \end{gather}
    where $f:M\rightarrow \Comp$ is a smooth function and $df:TM\rightarrow \Comp$ is the differential of $f$. We also write $\cx_X$ as $\mbf{X}\cdot \mbf{D}$; this can be thought of as the derivative in direction $\mbf{X}$.
    
    In the case when $M$ is a an open subset of $\Real^3$, given $\vf{Y}\in \mathfrak{X}(M)$ we define the operator
    \begin{gather}
        \vf{Y} \times \cx:\mathfrak{X}(M) \times L^2(E) \rightarrow L^2(E)
    \end{gather}
    such that for $X \in \mathfrak{X}(M)$,
    \begin{equation}\label{eq:cross_product}
    \vf{X} \cdot (\vf{Y} \times \cx)\psi = [(\vf{X} \times  \vf{Y})\cdot \cx] \psi = \cx_{\vf{X}\times \vf{Y}}\psi.
    \end{equation}
\end{definition}
Given a vector field $\vf{X}(\kvec) = v^a(\kvec) \uv{e}_a$ on an open subset of $\Real^3$, where $\uv{e}_a$ are an orthogonal basis of $\mathbb{R}^3$, we can define the operator on $L^2(E)$
\begin{equation}
    \vf{X} \cdot \svop{K} = v^a\sop{K}_a
\end{equation}
and similarly for $\mbf{J}$ and $\mbf{P}$. In this sense, we can think of the generators $\svop{K}$, $\svop{J}$, and $\svop{P}$ as having the same domain and range as a connection:
\begin{equation}
    \svop{K}: \mathfrak{X}(M) \times L^2(E) \rightarrow L^2(E),
\end{equation}
with the Cartesian components $\sop{K}_a$ corresponding to the constant vector fields $\uv{e}_a$, $\sop{K}_a = \uv{e}_a \cdot \svop{K}$. Note also that
\begin{equation}
    [(\svop{P} \times \cx)\psi] = (\kvec \times \mbf{D})\psi
\end{equation}
where the $\kvec$ on the rhs is considered as a vector field.

Ideally, we would like for $\mbf{Q} \doteq i \cx$ to be resemble a position operator. One of the necessary requirements for this correspondence is that $\mbf{Q}$ be Hermitian, that is,
\begin{equation}
    \secip{Q_a\psi}{\zeta} = \secip{\psi}{Q_a\zeta}.
\end{equation}
We will call $\cx$ an anti-Hermitian connection if this holds. We will also be interested in the cases when $\cx$ possesses rotational symmetry:
\begin{definition}
    The connection $\cx$ is rotationally symmetric if
    \begin{equation}\label{eq:poincare_symm}
        \cx_{R \vf{X}} = \Sigma_R\cx_{\vf{X}} \Sigma_{R^{-1}}
    \end{equation}
    for any $R \in \SO(3)$, or equivalently (\cite{Hall2015}, Prop. C.7), if $\cx$ is a vector operator:
    \begin{equation}
        [\cx_a,\svop{J}_b] = i\epsilon_{abc}\cx_{c}.
    \end{equation}
\end{definition}

We will later discuss particular choices of a connection on $E$. For now, assume we have chosen a connection $\cx$. We then define the splitting of $\svop{J}$ induced by $\cx$:
\begin{align}
    \svop{L}^{\cx} &\doteq -i\svop{P} \times \cx = -i\kvec \times \cx \label{eq:LD}\\
    \svop{S}^{\cx}&\doteq \svop{J} - \svop{L}^{\cx} \label{eq:SD}
\end{align}
where $\kvec$ is considered as a vector field on $M$.
We will explore the extent to which $\SDbold$ and $\LDbold$ can be considered SAM and OAM operators. In particular, we will show how the properties of $\cx$ determine if they are vector operators and satisfy angular momentum commutation relations.

\begin{theorem}
    Suppose $\cx$ is rotationally symmetric. Then $\LDbold$ and $\SDbold$ are vector operators:
    \begin{align}
        [\LD_m,\sop{J}_n] &= i\epsilon_{mnp}\LD_p \\
        [\SD_m,\sop{J}_n] &= i\epsilon_{mnp}\SD_p
    \end{align}
    where the components are defined with respect to some right handed orthonormal basis of $(\mbf{v}_1,\mbf{v}_2,\mbf{v_3})$ of $\mathbb{R}^3$, that is, $\LD_m \doteq \mbf{v}_m \cdot \LDbold$ and $\SD_m \doteq \mbf{v}_m \cdot \SDbold$.
\end{theorem}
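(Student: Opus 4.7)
The plan is to first reduce to showing that $\LDbold$ is a vector operator, and then to establish this by the standard fact that the cross product of two vector operators is a vector operator. Since $\svop{J}$ itself satisfies $[\sop{J}_a,\sop{J}_b]=i\epsilon_{abc}\sop{J}_c$, it is a vector operator, and so from $\SDbold = \svop{J} - \LDbold$ the corresponding identity for $\SDbold$ follows at once from linearity of the commutator. I would first work in the Cartesian basis $\uv{e}_a$: once the commutation relations are verified in that frame, passing to any right-handed orthonormal basis $(\mbf{v}_1,\mbf{v}_2,\mbf{v}_3)$ is automatic, since for any vector operator $\mbf{V}$ and constant vectors $\mbf{u},\mbf{w}$ one has $[\mbf{u}\cdot\mbf{V},\mbf{w}\cdot\svop{J}] = i(\mbf{u}\times\mbf{w})\cdot\mbf{V}$, and right-handedness gives $\mbf{v}_m\times\mbf{v}_n = \epsilon_{mnp}\mbf{v}_p$.

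The two inputs are that $\svop{P}$ is a vector operator, $[\sop{P}_a,\sop{J}_b]=i\epsilon_{abc}\sop{P}_c$ (part of the Poincaré algebra recalled in the Appendix), and that $\cx$ is a vector operator, $[\cx_a,\sop{J}_b] = i\epsilon_{abc}\cx_c$, where $\cx_a \doteq \cx_{\uv{e}_a}$; the latter is precisely the restatement of rotational symmetry given in the Definition. Writing $\LD_m = -i\epsilon_{mab}\sop{P}_a\cx_b$ and applying the commutator Leibniz rule $[AB,C]=A[B,C]+[A,C]B$ yields
\[
    [\LD_m,\sop{J}_n] = \epsilon_{mab}\epsilon_{bnc}\sop{P}_a\cx_c + \epsilon_{mab}\epsilon_{anc}\sop{P}_c\cx_b.
\]
Reducing each product of Levi-Civita symbols via $\epsilon_{iab}\epsilon_{icd}=\delta_{ac}\delta_{bd}-\delta_{ad}\delta_{bc}$ (with due care to the cyclic rearrangements needed to bring the contracted indices to the first slot) collapses the right-hand side to $\sop{P}_m\cx_n - \sop{P}_n\cx_m$, which is exactly $i\epsilon_{mnp}\LD_p$.

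The computation is essentially routine, and the principal subtlety is operator ordering: $\sop{P}_a$ and $\cx_b$ generally do not commute, since by the Leibniz rule (\ref{eq:Leibniz}) one has $[\cx_b,k_a]=\delta_{ab}$. However, both terms arising from the Leibniz expansion involve $\svop{P}$ and $\cx$ in the same order, so no rearrangement is ever required, and the antisymmetry of $\epsilon_{mab}$ would in any case kill any would-be contribution from reordering. The conceptual content of the theorem is therefore that rotational symmetry of $\cx$ is precisely the input needed to make the cross-product construction $-i\svop{P}\times\cx$ into a well-defined vector operator, from which the vector operator property of $\SDbold$ is an automatic consequence.
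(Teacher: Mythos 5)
Your proof is correct, but it proceeds along a genuinely different route from the paper's. The paper works at the finite group level: it starts from the equivariance identity $\cx_{R\vf{X}} = \tilde{\Sigma}_R\cx_{\vf{X}}\tilde{\Sigma}_{R^{-1}}$, uses the rotation-invariance of the vector field $\kvec$ (so $\cx_{\mbf{v}\times R^{-1}\kvec} = \cx_{\mbf{v}\times\kvec}$) to deduce $(R\mbf{v})\cdot\LDbold = \tilde{\Sigma}_R(\mbf{v}\cdot\LDbold)\tilde{\Sigma}_{R^{-1}}$, and only then differentiates at $t=0$ to extract the commutator. You instead start directly from the infinitesimal form $[\cx_a,\sop{J}_b]=i\epsilon_{abc}\cx_c$ — which the paper's Definition explicitly flags as an equivalent characterization of rotational symmetry — and combine it with $[\sop{P}_a,\sop{J}_b]=i\epsilon_{abc}\sop{P}_c$ via the commutator Leibniz rule and a Levi-Civita contraction. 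Your expansion $\epsilon_{mab}\epsilon_{bnc}\sop{P}_a\cx_c + \epsilon_{mab}\epsilon_{anc}\sop{P}_c\cx_b = \sop{P}_m\cx_n - \sop{P}_n\cx_m = i\epsilon_{mnp}\LD_p$ checks out, and your care about ordering is justified and correct: the Leibniz rule $\cx_b(k_a\psi)=\delta_{ab}\psi + k_a\cx_b\psi$ gives $[\cx_b,\sop{P}_a]=\delta_{ab}$, but both Leibniz terms keep $\sop{P}$ to the left of $\cx$, and even a reordering error would be killed by $\epsilon_{mab}\delta_{ab}=0$. The unstated step worth making explicit is the one identifying $\LD_m = -i\cx_{\uv{e}_m\times\kvec}$ with $-i\epsilon_{mab}\sop{P}_a\cx_b$: this uses the $f$-linearity $\cx_{f\vf{X}}=f\cx_{\vf{X}}$ of a connection in its vector-field slot to pull the functions $k_a$ out in front as multiplication operators. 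What your approach buys is a fully algebraic, self-contained verification amounting to the standard lemma that the cross product of two vector operators is a vector operator (provided no reordering obstruction), at the cost of taking the equivalence between the finite and infinitesimal forms of rotational symmetry as given. The paper's approach is closer to the geometric definition but requires the somewhat subtle fact that $\kvec$ is a rotation-invariant vector field.
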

\begin{proof}
    Let $\mbf{v}$ be a constant vector field on $M$. Then by rotational symmetry
    \begin{align}
        \cx_{R\mbf{v}\times \kvec} = \tilde{\Sigma}_R \cx_{\mbf{v}\times R^{-1}\kvec}\tilde{\Sigma}_{R^{-1}} =
        \tilde{\Sigma}_R \cx_{\mbf{v}\times \kvec}\tilde{\Sigma}_{R^{-1}}
    \end{align}
    where we have used the fact that the vector field $\kvec$ is rotationally invariant. Thus by Eq. (\ref{eq:cross_product}):
    \begin{align}
        &R\mbf{v}\cdot (\kvec\times \cx)
        =\tilde{\Sigma}_R \mbf{v}\cdot (\kvec\times \cx)\tilde{\Sigma}_{R^{-1}} \\
        &(R\mbf{v})\cdot \LDbold = \tilde{\Sigma}_R (\mbf{v}\cdot \LDbold)\tilde{\Sigma}_{R^{-1}}
    \end{align}
    Let $R_{n}(t)$ denote the rotation by angle $t$ about $v_n$. Then
    \begin{gather}
        \frac{d}{dt}\Big|_{t=0} (R_{n}(t)\mbf{v}_m)\cdot \LDbold = \frac{d}{dt}\Big|_{t=0}\tilde{\Sigma}_{R_{n}(t)}(\mbf{v}_m\cdot \LDbold)\tilde{\Sigma}_{R_{n}(-t)} \\
        (\mbf{v}_n \times \mbf{v}_m) \cdot \LDbold = \frac{d}{dt}\Big|_{t=0}e^{-i(\mbf{v}_n \cdot \svop{J})t} \LD_m  e^{i(\mbf{v}_n \cdot \svop{J})t} \\
        -\epsilon_{mnp}\LD_p = -i(\sop{J}_n\LD_m - \LD_m\sop{J}_n) \\
        i\epsilon_{mnp} \LD_p = [\LD_m,\sop{J}_n]
    \end{gather}
    proving that $\LDbold$ is a vector operator. Since $\svop{J}$ and $\LDbold$ are vector operators, $\SDbold$ is also a vector operator.
\end{proof}
We note that the rotational invariance of $\cx$ is a very mild condition and is expected of any useful connection; it essentially imposes the condition that $\cx$ be constructed from vector operators. We now turn to the question of whether $\SDbold$ is an internal operator as one would expect for an SAM operator. An operator $\bar{\mbf{S}}$ on a vector bundle $\pi:E\rightarrow M$ is internal if it preserves fibers, that is, if it only changes the polarization but not the momentum. The corresponding operator $\svop{S}$ on $L^2(E)$ is thus internal if for any $\psi$ and $\kvec_0$, $[\svop{S}\psi](\kvec_0)$ depends only on $\psi(\kvec_0)$ and not on $\psi$ at any other $\kvec$. This is equivalent to the condition that $\svop{S}$ is a point operator on $L^2(E)$ \cite{Tu2017differential}.
\begin{theorem}
    $\SDbold$ is a an internal operator for any connection $\cx$.
\end{theorem}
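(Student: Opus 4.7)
The plan is to show that $\SDbold$ is a \emph{tensorial} (or $C^\infty(M)$-linear) operator, since a linear operator on $L^2(E)$ is a point operator (and hence internal in the sense defined above) if and only if it commutes with multiplication by every smooth function $f:M\to\Comp$. The strategy is to expand both $\svop{J}(f\psi)$ and $\LDbold(f\psi)$ using their respective ``Leibniz rules'' and observe that the inhomogeneous terms cancel exactly when we form the difference $\SDbold = \svop{J} - \LDbold$.

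First, I would compute $\LDbold(f\psi)$ directly from the definition. Since $\LD_a = -i\epsilon_{abc}k_b \cx_{\uv{e}_c}$, applying the connection Leibniz rule (\ref{eq:Leibniz}) yields
\begin{equation}
    \LD_a(f\psi) = -i\epsilon_{abc}k_b\bigl[(\partial_c f)\psi + f\,\cx_{\uv{e}_c}\psi\bigr] = -i(\kvec\times \nabla f)_a\,\psi + f\,\LD_a\psi.
\end{equation}
Next, I would compute $\sop{J}_a(f\psi)$ by unwinding the generator from the induced representation. Since $(\tilde{\Sigma}_{R_a(t)}(f\psi))(\kvec) = f(R_a(-t)\kvec)\,(\tilde{\Sigma}_{R_a(t)}\psi)(\kvec)$, differentiating at $t=0$ and using $\frac{d}{dt}|_{t=0}R_a(-t)\kvec = -\uv{e}_a\times\kvec$ gives
\begin{equation}
    \sop{J}_a(f\psi) = -i(\kvec\times\nabla f)_a\,\psi + f\,\sop{J}_a\psi.
\end{equation}
Subtracting, the derivative-of-$f$ terms cancel and one obtains $\SD_a(f\psi) = f\,\SD_a\psi$ for every smooth $f$ and every $\psi \in L^2(E)$, so $\SDbold$ is $C^\infty(M)$-linear and hence a point operator.

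The main conceptual obstacle is the second step: one must check that the generator $\svop{J}$, although defined abstractly from the induced representation $\tilde{\Sigma}$, carries exactly the same ``symbol'' $-i\kvec\times\nabla$ when acting on products $f\psi$ as the connection-based operator $\LDbold$ does. This is what makes the cancellation possible and is independent of the choice of connection $\cx$; the connection $\cx$ only influences the internal piece. Once this matching of symbols is established, the tensoriality of $\SDbold$, and hence the fact that it preserves fibers, follows immediately.
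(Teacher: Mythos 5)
Your proof is correct and follows essentially the same route as the paper's: establish $C^\infty(M)$-linearity of $\SDbold$ by expanding $\svop{J}(f\psi)$ from the induced representation and $\LDbold(f\psi)$ from the connection's Leibniz rule, then observe that both produce the identical inhomogeneous term $-i(\kvec\times\nabla f)\,\psi$ (equivalently $-i\psi\,df(\uv{e}_a\times\kvec)$), which cancels in the difference. The paper invokes Lemma 7.23 of Tu for the equivalence between $C^\infty(M)$-linearity and being a point operator, which is the same criterion you state without citation.
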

\begin{proof}
    By Lemma 7.23 in Ref. \cite{Tu2017differential}, to show $\SDbold$ is internal it is sufficient to show that for any $\psi \in L^2(E)$ and any smooth function $f:M\rightarrow \Comp$,
    \begin{equation}
        \SDbold(f\psi) = f\SDbold(\psi),
    \end{equation}
    and for this it is sufficient to prove each component of $\SDbold$ is $f$-linear.
    We have
    \begin{align}
        [\sop{J}_m(f\psi)](\kvec) &= i\frac{d}{dt}\Big|_{t=0} [\tilde{\Sigma}_{R_m(t)}(f\psi)](\kvec) \\
        &= i\frac{d}{dt}\Big|_{t=0}\Sigma_{R_m(t)}\big[ f\big(R_m(-t)\kvec\big)\psi\big(R_m(-t)\kvec\big)\big] \\
        &= if(\kvec)[\tilde{\Sigma}_{R_m(t)}\psi](\kvec) + i\psi(\kvec)\frac{d}{dt}\Big|_{t=0}f\big(R_m(-t)\kvec\big) \\
        &= f(\kvec)[\sop{J}_m\psi](\kvec) + i \psi(\kvec)\frac{d}{dt}\Big|_{t=0}f\big(\kvec -\mbf{v}_m \times \kvec t + O(t^2)\big)
    \end{align}
    so that
    \begin{equation}
        \sop{J}_m(f\psi) = f\sop{J}_{m}\psi - i \psi(\kvec)df(\mbf{v}_m\times \kvec).
    \end{equation}
    By the Leibniz rule for $\cx$,
    \begin{align}
        \LD_m(f\psi)&= -i(\kvec\times \cx)_m (f\psi) \\
        &= -i\cx_{\mbf{v}_m \times \kvec}(f\psi) \\
        &= f\LD_m\psi - i\psi df(\mbf{v}_m\times \kvec).
    \end{align}
    Thus,
    \begin{align}
        \SD_m(f\psi) &= \sop{J}_m(f\psi)-\LD_m(f\psi)\\
        &=f(\sop{J}_m\psi -\LD_m \psi) \\
        &= f\SD_m(\psi),
    \end{align}
    proving the result.
\end{proof}
We now address the fundamental question of whether or not $\SDbold$ and $\LDbold$ satisfy $\so(3)$ commutation relations so that they are valid angular momentum operators. In contrast to the previous two results, this imposes a very stringent condition on $\cx$---one which we will show cannot be satisfied for massless particles. As for the affine connections encountered in general relativity, the curvature of a connection $\cx$ is defined as the map \cite{Tu2017differential}
\begin{gather}
    F_{\cx}:\mathfrak{X}(M) \times \mathfrak{X}(M) \times L^2(E)\rightarrow L^2(E) \nonumber \\
    F_{\cx}(\vf{X},\vf{Y})\psi = (\cx_{\vf{X}}\cx_{\vf{Y}} - \cx_{\vf{X}} \cx_{\vf{Y}} - \cx_{[\vf{X},\vf{Y}]})\psi \label{eq:curvature}
\end{gather}
where $[\vf{X},\vf{Y}]$ is the Jacobi-Lie bracket of vector fields. The curvature is most intuitively understood in terms of holonomy. That is, one can think of $F_{\cx}(\vf{X},\vf{Y})$ as measuring the change in a vector which is parallel transported via $\cx$ along an infinitesimal loop in $M$ in the plane specified by $\vf{X}$ and $\vf{Y}$. We note here the important fact that $F_{\cx}$ is defined pointwise in all three arguments (\cite{Tu2017differential}, Sec. 10.3), so $F_{\cx}(\mbf{u_1},\mbf{u_2})v$ is well-defined for $\mbf{u}_1,\mbf{u}_2 \in T_{\kvec}M$ and $v \in E(\kvec)$. The same is not true of $\cx$ which is defined pointwise in $\mathfrak{X}(M)$ but only locally in $L^2(E)$, that is, to define $[D_{\mbf{u}}\psi](\kvec)$, the section $\psi$ must be specified on a neighborhood of $\kvec$. The connection $\cx$ is called flat if $F_{\cx}$ vanishes identically. Let $S^2_r \seq \mathbb{R}^3$ denote the sphere of radius $r$ in $\kvec$ space, and $E|_{S^2_r}$ be the bundle obtained by restricting the base manifold of $E$ to $S^2_r$. Physically, this corresponds to considering monochromatic waves with energy $\sqrt{m^2 + r^2}$. We then define a slightly weaker notion of flatness of a connection:
\begin{definition}[$S^2$-flat connections]
    A connection $\cx$ on $E$ is $S^2$-flat if it is flat on $E|_{S^2_r}$ for every $r>0$. Equivalently, $\cx$ is $S^2$-flat if $F_{\cx}(\vf{X},\vf{Y}) = 0$ for all vector fields transverse to $\kvec$:
    \begin{subequations}\label{eq:vf_transversality}
    \begin{align}
        \vf{X}(\kvec)\cdot \khat &= 0  \\
        \vf{Y}(\kvec)\cdot \khat &= 0.
    \end{align}
    \end{subequations}
\end{definition}
Remarkably, the condition that $\SDbold$ and $\LDbold$ satisfy angular momentum relations is equivalent to the condition that the connection $\cx$ is $S^2$-flat.
\begin{theorem}\label{thm:curvature_theorem}
    $\SDbold$ and $\LDbold$ satisfy the angular momentum commutation relations
    \begin{align}
        [\SD_a, \SD_b] = i\epsilon_{abc}\SD_c \label{eq:SD_comm}\\
        [\LD_a, \LD_b] = i\epsilon_{abc}\LD_c \label{eq:LD_comm}
    \end{align}
    if and only if the connection $\cx$ is $S^2$-flat.
\end{theorem}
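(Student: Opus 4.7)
The plan is to reduce the biconditional to a single explicit computation of $[\LD_a, \LD_b]$, extract the curvature as the deviation from the canonical $\so(3)$ relations, and then identify the vanishing of that curvature with $S^2$-flatness.

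First, assuming the (implicit) rotational symmetry of $\cx$, Theorem 1 gives that $\LDbold$ is a vector operator, so $[\sop{J}_a,\LD_b] = [\LD_a,\sop{J}_b] = i\epsilon_{abc}\LD_c$. Expanding $[\SD_a,\SD_b] = [\sop{J}_a-\LD_a,\sop{J}_b-\LD_b]$, using $[\sop{J}_a,\sop{J}_b] = i\epsilon_{abc}\sop{J}_c$, and comparing with $i\epsilon_{abc}\SD_c = i\epsilon_{abc}(\sop{J}_c-\LD_c)$ collapses the two commutation relations (\ref{eq:SD_comm}) and (\ref{eq:LD_comm}) to the single equivalent statement $[\LD_a,\LD_b] = i\epsilon_{abc}\LD_c$.

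Second, I would compute $[\LD_a,\LD_b]$ directly. Writing $\LD_a = -i\epsilon_{aij}k_i D_j$ with $D_j \doteq \uv{e}_j \cdot \cx$, the Leibniz rule (\ref{eq:Leibniz}) applied to the coordinate function $k_k$ gives
\begin{equation}
[k_i D_j, k_k D_l] = k_i\delta_{jk}D_l - k_k\delta_{li}D_j + k_i k_k F_\cx(\uv{e}_j,\uv{e}_l),
\end{equation}
where I have used $[\uv{e}_j,\uv{e}_l] = 0$ in the definition (\ref{eq:curvature}) of $F_\cx$. Standard $\epsilon$-$\delta$ contractions turn the first two pieces of $-\epsilon_{aij}\epsilon_{bkl}[k_iD_j,k_kD_l]$ into $i\epsilon_{abc}\LD_c$, while the third piece, by bilinearity of $F_\cx$ together with the identity $\sum_{i,j}\epsilon_{aij}k_i\uv{e}_j = \uv{e}_a\times\kvec$, assembles into $-F_\cx(\uv{e}_a\times\kvec,\uv{e}_b\times\kvec)$. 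Thus
\begin{equation}
[\LD_a,\LD_b] = i\epsilon_{abc}\LD_c - F_\cx(\uv{e}_a\times\kvec,\uv{e}_b\times\kvec).
\end{equation}

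Third, I would show $F_\cx(\uv{e}_a\times\kvec,\uv{e}_b\times\kvec) = 0$ for all $a,b$ if and only if $\cx$ is $S^2$-flat. Since $F_\cx$ is pointwise in all three arguments, it suffices to verify this at each $\kvec_0 \neq 0$; there, $\{\uv{e}_a\times\kvec_0\}_{a=1,2,3}$ spans the two-dimensional tangent plane $T_{\kvec_0}S^2_{|\kvec_0|}$, so bilinearity and antisymmetry of $F_\cx$ make the pointwise vanishing equivalent to $F_\cx(\vf{X},\vf{Y}) = 0$ for all vector fields satisfying the transversality conditions (\ref{eq:vf_transversality}). The main obstacle will be the index bookkeeping in the second step, particularly the need to recognize that after all the $\kvec$-contractions the residual operator is pointwise in its two vector-field arguments, so it really is $F_\cx$ evaluated on the specific tangent vectors $\uv{e}_a\times\kvec$ rather than on some extended vector fields on $M$.
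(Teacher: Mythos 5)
Your proof is correct and follows essentially the same structure as the paper's: reduce the $\SDbold$ relation to the $\LDbold$ relation, compute $[\LD_a,\LD_b]$ to isolate the residual curvature term $F_{\cx}(\uv{e}_a\times\kvec,\uv{e}_b\times\kvec)$, then use pointwise-ness of $F_{\cx}$ to equate its vanishing with $S^2$-flatness. The only cosmetic difference is that you expand in Cartesian components $k_iD_j$ and invoke Leibniz on the coordinate functions $k_i$, whereas the paper works directly with $\cx_{\uv{e}_a\times\kvec}$ and computes the Jacobi--Lie bracket $[\uv{e}_a\times\kvec,\uv{e}_b\times\kvec]=-\epsilon_{abc}(\uv{e}_c\times\kvec)$; you are also somewhat more explicit than the paper in flagging that the reduction $\SDbold\leftrightarrow\LDbold$ uses the vector-operator property (and hence the implicit rotational-symmetry hypothesis) from Theorem~1.
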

\begin{proof}
    We first note that since $[\sop{J}_a,\sop{J}_b]=\sop{J}_c$, Eqs. (\ref{eq:SD_comm}) and (\ref{eq:LD_comm}) are equivalent, so it suffices to consider only the commutation relations for $\LDbold$. In general, we have
    \begin{align}
        [\LD_a,\LD_b] &= -(\cx_{\uv{e}_a\times \kvec} \cx_{\uv{e}_b\times \kvec} - \cx_{\uv{e}_b\times \kvec} \cx_{\uv{e}_a\times \kvec}) \\
        &= -\cx_{[\uv{e}_a\times\kvec,\uv{e}_b\times\kvec]} - F_{\cx}(\uv{e}_a \times \kvec, \uv{e}_b \times \kvec).
    \end{align}
    By direct computation $[\uv{e}_a\times\kvec,\uv{e}_b\times\kvec]= -\epsilon_{abc}(\uv{e}_c \times \kvec)$, so
    \begin{align}
        [\LD_a,\LD_b] &= \epsilon_{abc}\cx_{\uv{e}_c \times \kvec} - F_{\cx}(\uv{e}_a \times \kvec, \uv{e}_b \times \kvec) \\
        &= i\epsilon_{abc} \LD_c - F_{\cx}(\uv{e}_a \times \kvec, \uv{e}_b \times \kvec). \label{eq:curvature_comm}
    \end{align}
    We thus see that $\LDbold$ satisfies angular momentum commutation relations if and only if
    \begin{equation}
        F_{\cx}(\uv{e}_a \times \kvec, \uv{e}_b \times \kvec) = 0
    \end{equation}
    for each $a,b$. Since $F_{\cx}$ is defined pointwise in all arguments, this condition is equivalent to
    \begin{equation}
        F_{\cx}(\vf{X} \times \kvec,\vf{Y}\times \kvec) = 0
    \end{equation}
    for any vector fields $\vf{X}, \vf{Y}$, and this in turn is equivalent to
    \begin{equation}
        F_{\cx}(\vf{X}^\perp,\vf{Y}^\perp) = 0
    \end{equation}
    for all vector fields $\vf{X}^\perp$ and $\vf{Y}^\perp$ satisfying the $S^2$-transversality condition Eq. (\ref{eq:vf_transversality}). Thus, the angular momentum commutation relations Eqs. (\ref{eq:SD_comm}) and (\ref{eq:LD_comm}) are satisfied if and only if $\cx$ is $S^2$-flat.
\end{proof}

\section{Applications}\label{sec:Applications}

\subsection{Massless particles do not admit an SAM-OAM decomposition}
By Theorem \ref{thm:curvature_theorem} a valid SAM-OAM decomposition of the form (\ref{eq:LD})-(\ref{eq:SD}) requires the existence of an $S^2$-flat connection. No such connection exists for massless particles with helicity $h\neq 0$.
\begin{theorem}[Massless SAM-OAM no-go theorem]\label{thm:nogo_thm}
    The massless particle bundle $\gamma_h$ with $h\neq 0$ admits no SAM-OAM decomposition of the form $\svop{J} = \svop{L}^{\cx} + \svop{S}^{\cx}$ where $\LDbold = -i\kvec\times \cx$ and $\cx$ is a connection.
\end{theorem}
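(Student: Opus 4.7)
The plan is to combine Theorem \ref{thm:curvature_theorem} with a standard Chern--Weil argument on the momentum sphere. By Theorem \ref{thm:curvature_theorem}, an SAM-OAM decomposition of the form $\svop{J} = \svop{L}^{\cx} + \svop{S}^{\cx}$ exists iff the underlying connection $\cx$ is $S^2$-flat. So it suffices to prove the purely geometric statement that the line bundle $\gamma_h$ admits no $S^2$-flat connection when $h \neq 0$.

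I would proceed by contradiction: suppose $\cx$ were an $S^2$-flat connection on $\gamma_h$. Pick any radius $r>0$ and restrict $\cx$ to the embedded momentum sphere $S^2_r \subset \Lightcone$. The restriction $\cx|_{S^2_r}$ is a connection on the line bundle $\gamma_h|_{S^2_r} \to S^2_r$, and its curvature is exactly the pullback of $F_{\cx}$ along the inclusion $S^2_r \hookrightarrow \Lightcone$. Since every pair of tangent vectors to $S^2_r$ at a point $\kvec$ is perpendicular to $\khat$, the $S^2$-flatness hypothesis forces $F_{\cx|_{S^2_r}} = 0$; that is, the restricted connection is honestly flat.

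Now I would invoke the Chern--Weil formula for the first Chern number of a line bundle over a closed surface, namely $C(\gamma_h|_{S^2_r}) = \frac{i}{2\pi}\int_{S^2_r} F_{\cx|_{S^2_r}}$. Flatness of the restriction makes the right-hand side vanish, so $C(\gamma_h|_{S^2_r}) = 0$. On the other hand, the paper has already noted that $\gamma_h$ has Chern number $-2h$, which is computed by integrating over precisely such a momentum sphere (the lightcone deformation retracts to $S^2_r$, so $C(\gamma_h|_{S^2_r}) = C(\gamma_h) = -2h$). For $h \neq 0$ this is nonzero, a contradiction.

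The main obstacle, and really the only nontrivial point, is ensuring the identification $C(\gamma_h|_{S^2_r}) = -2h$; I would either quote this directly from the cited classification results (\cite{PalmerducaQin_PT,PalmerducaQin_helicity}) or note that $\Lightcone \simeq \pspace$ retracts onto $S^2_r$, so line bundles on the lightcone are classified by their restriction to any such sphere and the Chern numbers agree. The rest of the argument is a direct application of Chern--Weil, and the clean pointwise definition of $F_{\cx}$ emphasized in the paper is precisely what allows $S^2$-flatness (a restriction on transverse vector fields) to be transferred to genuine flatness of the restricted bundle.
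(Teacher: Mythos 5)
Your proposal is correct and follows essentially the same path as the paper's own proof: reduce via Theorem \ref{thm:curvature_theorem} to the nonexistence of an $S^2$-flat connection, restrict to a momentum sphere, apply the Chern--Weil integral formula, and derive a contradiction with $C(\gamma_h)=-2h\neq 0$. The only cosmetic difference is that you phrase it as a proof by contradiction, whereas the paper argues directly that the nonvanishing integral forces the curvature two-form to be nonzero somewhere.
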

\begin{proof}
The massless particle bundle of a helicity $h$ particle is isomorphic to $\pi:\gamma_h \rightarrow \pspace$ and has (first) Chern number $C(\gamma_h)=-2h$ \cite{PalmerducaQin_helicity}. When calculating the Chern number, we can restrict the base manifold to the unit sphere $S^2$, that is, $C(\gamma_h) = C(\gamma_h|_{S^2})$ (\cite{PalmerducaQin_GT}, Eq. 4.1). The Chern number can be calculated in terms of any connection $\cx$ on $\gamma_h$, even though the result is independent of the choice of connection \cite{Tu2017differential}. In particular, smoothly choose unit vectors $\uv{e}_h(\khat) \in \gamma_h|_{S^2}$ for $\khat \in S^2\setminus p$. We have removed a single point $p$ so that a smooth choice is possible. Then we can express the curvature in this basis:
\begin{equation}
    F_{\cx}(\vf{X}^\perp, \vf{Y}^\perp)\mbf{e}_h = \Omega(\vf{X}^\perp, \vf{Y}^\perp)\mbf{e}_h
\end{equation}
where $\Omega$ is a differential 2-form on $S^2\setminus p$, called the connection form of $\cx$ with respect to $\uv{e}_h$ \cite{Tu2017differential}. Then \cite{Tu2017differential}
\begin{equation}
    C(\gamma_h) = C(\gamma_h|_{S^2}) = \int_{S^2} \Omega = -2h \neq 0.
\end{equation}
Since the result is nonzero, $\Omega$ must be nonzero for some points on $S^2$. Therefore, there exist $\vf{X}^\perp$ and $\vf{Y}^\perp$ such that $F_{\cx}(\vf{X}^\perp, \vf{Y}^\perp) \neq 0$, and thus $\cx$ is not $S^2$-flat. By Theorem \ref{thm:curvature_theorem}, $\cx$ does not produce an SAM-OAM decomposition. $\cx$ is a completely arbitrary connection on $\gamma_h$, so this proves the theorem.
\end{proof}

This result follows a recent series of no-go theorems we recently proved pertaining to massless SAM-OAM splittings \cite{PalmerducaQin_SAMOAM}. Strictly speaking, Theorem \ref{thm:curvature_theorem} is a special case of No-Go Theorem 1 in Ref. \cite{PalmerducaQin_helicity}, which showed that no massless SAM-OAM decomposition is possible without constraints on the form of SAM and OAM operators. However, the proofs proceed by different arguments, with that in Ref. \cite{PalmerducaQin_SAMOAM} based on the classification of vector space UIRs of $\SO(3)$, whereas here the argument is based on the non-vanishing curvature of connections.

\subsection{The boost connection and the \texorpdfstring{$\mbf{J}_\parallel$, $\mbf{J}_\perp$}{massless} splitting}
We now show that the massless splitting \cite{Bliokh2010,Bialynicki-Birula2011} discussed in the introduction
\begin{gather}
    \svop{J} = \svop{J}_\perp + \svop{J}_\parallel \\
    \svop{J}_\perp = -\frac{1}{\sop{H}}\svop{P} \times \svop{K} = \frac{1}{\kmag} \kvec \times \svop{K} \\
    \svop{J}_\parallel = (\phats \cdot\svop{J})\phats
\end{gather}
is induced by a connection, but that the connection has nonvanishing curvature, explaining why $\svop{J}_\parallel$ and $\svop{J}_\perp$ do not satisfy angular momentum commutation relations. 
\begin{theorem}[Boost connection]
    Suppose $E$ is a particle bundle with mass $m \geq 0$. Then
    \begin{align}
        \DK &= -\frac{i}{2}\Big(\frac{1}{\sop{H}}\svop{K} + \svop{K}\frac{1}{\sop{H}}\Big) \\
        &= - \frac{i}{\sop{H}}\Big(\svop{K} - \frac{i\svop{P}}{2\sop{H}}\Big) \label{eq:DK_b}
    \end{align}
    is an anti-Hermitian, rotationally symmetric connection on $E$ which we call the boost connection. For massless particles, this connection induces the separation of $\svop{J}$ into $\svop{L}^{K} = \svop{J}_\perp$ and $\svop{S}^{K} = \svop{J}_\parallel$ (which are vector operators but not angular momentum operators).
\end{theorem}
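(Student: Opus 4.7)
The plan is to verify the theorem's assertions in sequence: first the equivalence of the two displayed forms for $\DK$, then the three structural properties (connection, anti-Hermitian, rotationally symmetric) which hold for any $m\geq 0$, and finally the massless-specific identification $\svop{L}^K=\svop{J}_\perp$, $\svop{S}^K=\svop{J}_\parallel$. Throughout I will rely on the \Poincare commutators from the appendix---specifically $[\sop{K}_a,\sop{P}_b]=i\sop{H}\delta_{ab}$ and $[\sop{K}_a,\sop{H}]=i\sop{P}_a$---together with the facts that $\sop{H}$ is an $\SO(3)$ scalar while $\svop{K}$ is an $\SO(3)$ vector operator. These are essentially the only ingredients needed; the proof is an exercise in organizing commutators.

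For the Leibniz rule---the only nontrivial connection axiom, since $f$-linearity in $\vf{X}$ is immediate from $\DK_\vf{X}=X^a\DK_a$---I would first derive $[\sop{K}_a,f(\svop{P})]=i\sop{H}\,\partial_a f$ for any smooth $f:M\to\Comp$ by applying the product rule to $[\sop{K}_a,\sop{P}_b]=i\sop{H}\delta_{ab}$ and using $[\sop{H},f(\svop{P})]=0$. Substituting this into $[\DK_a,f]=-\tfrac{i}{2}(\sop{H}^{-1}[\sop{K}_a,f]+[\sop{K}_a,f]\sop{H}^{-1})$ collapses to $\partial_a f$, which is exactly the Leibniz identity $\DK_a(f\psi)=(\partial_a f)\psi+f\DK_a\psi$; extending from the coordinate fields $\uv{e}_a$ to a general $\vf{X}$ by linearity gives $\DK_\vf{X}(f\psi)=df(\vf{X})\psi+f\DK_\vf{X}\psi$. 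Anti-Hermiticity is then a one-liner from $(\sop{H}^{-1}\sop{K}_a)^\dagger=\sop{K}_a\sop{H}^{-1}$, which makes the symmetrized combination Hermitian and hence its $-i$ multiple anti-Hermitian. Rotational symmetry reduces to checking $[\DK_a,\sop{J}_b]=i\epsilon_{abc}\DK_c$, which follows because $\sop{H}^{-1}$ commutes with $\sop{J}_b$ and $\svop{K}$ is a vector operator; by the cited remark in the definition this is equivalent to (\ref{eq:poincare_symm}). The equivalence of the two forms in (\ref{eq:DK_b}) is algebra using $[\sop{K}_a,\sop{H}^{-1}]=-\sop{H}^{-2}[\sop{K}_a,\sop{H}]=-i\sop{P}_a\sop{H}^{-2}$.

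For the massless identification I compute $\svop{L}^K=-i\svop{P}\times\DK$ using the second form of (\ref{eq:DK_b}): the piece involving $\svop{K}$ gives $-\sop{H}^{-1}\svop{P}\times\svop{K}$ after sliding the scalar $\sop{H}^{-1}$ past $\svop{P}$, while the correction proportional to $\epsilon_{abc}\sop{P}_b\sop{P}_c$ vanishes by antisymmetry. Hence $\svop{L}^K=-\sop{H}^{-1}\svop{P}\times\svop{K}=\svop{J}_\perp$, and by definition $\svop{S}^K=\svop{J}-\svop{L}^K=\svop{J}_\parallel$ via (\ref{eq:helicity_decomp}). The parenthetical claim that these are vector operators but not angular momentum operators then follows from the preceding general Theorem \ref{thm:curvature_theorem} together with the nonvanishing curvature of $\cx^K$ on $\gamma_h$ for $h\neq 0$ (as guaranteed by Theorem \ref{thm:nogo_thm}).

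The main obstacle is the Leibniz rule, since this is what certifies $\DK$ as a bona fide connection rather than a generic linear operator. The algebraic phenomenon making it go through is that the $\sop{H}$ produced by $[\sop{K}_a,\sop{P}_b]$ exactly cancels the $\sop{H}^{-1}$ dressing. This cancellation is what forces the symmetric normalization $\tfrac{1}{2}(\sop{H}^{-1}\sop{K}+\sop{K}\sop{H}^{-1})$: any other weighting would simultaneously spoil Hermiticity and fail to deliver the canonical $\partial_a f$ on the right-hand side of the Leibniz rule.
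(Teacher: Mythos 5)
Your proof is correct and reaches the same conclusions as the paper, but the route to the Leibniz rule --- the crux of the argument --- is genuinely different. The paper first treats the asymmetric operator $\barDK = -i\sop{H}^{-1}\svop{K}$, returns to the \emph{group-level} definition of $\svop{K}$ as $i\tfrac{d}{dt}\big|_{t=0}\tilde{\Sigma}(\Lambda_{-\vf{X}(\kvec)t})$, expands $\big[\tilde{\Sigma}(\Lambda_{-\vf{X}(\kvec)t})(f\psi)\big](\kvec)$ explicitly, and differentiates to read off $df(\vf{X})\psi$; it then adds the point operator $-i\svop{P}/(2\sop{H}^2)$ to reach the symmetrized $\DK$. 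You instead stay entirely at the Lie-algebra level: from $[\sop{K}_a,\sop{P}_b]=i\sop{H}\delta_{ab}$ and the derivation property of the commutator you get $[\sop{K}_a,f(\svop{P})]=i\sop{H}\,\partial_a f$, and in the symmetrized $\DK$ the factors of $\sop{H}$ cancel on both sides to give $[\DK_a,f]=\partial_a f$. This is cleaner and makes it visible \emph{why} the symmetric normalization is forced (it simultaneously yields Hermiticity and the correct Leibniz coefficient), which is a nice insight the paper's route doesn't surface as directly. Your approach does lean on extending $[\sop{K}_a,f(\svop{P})]=i\sop{H}\,\partial_a f$ from polynomials to general smooth $f$; this is standard (it follows from the adjoint action of the boost one-parameter group on $\svop{P}$) but worth a one-line justification, and the paper's explicit group-expansion sidesteps it. The remaining pieces --- equivalence of the two displayed forms via $[\sop{K}_a,\sop{H}^{-1}]=-i\sop{P}_a\sop{H}^{-2}$, anti-Hermiticity from $(\sop{H}^{-1}\sop{K}_a)^\dagger=\sop{K}_a\sop{H}^{-1}$, rotational symmetry from the scalar/vector nature of $\sop{H}$ and $\svop{K}$, and the massless identification $\svop{L}^K=-\sop{H}^{-1}\svop{P}\times\svop{K}=\svop{J}_\perp$ with the $\epsilon_{abc}\sop{P}_b\sop{P}_c=0$ cancellation --- agree with the paper. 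One small citation slip: the vector-operator part of the parenthetical follows from Theorem~1 (rotational symmetry $\Rightarrow$ vector operator), not from Theorem~\ref{thm:curvature_theorem}; and like the paper, you are implicitly assuming $h\neq 0$ when invoking Theorem~\ref{thm:nogo_thm} for ``not angular momentum operators.''
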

\begin{proof}
    We will first show that 
    \begin{equation}\label{eq:DK_bar}
        \barDK = -\frac{i}{\sop{H}}\svop{K}
    \end{equation}
    is a connection. For $\vf{X} \in \mathfrak{X}(M)$, $\psi \in L^2(E)$, and smooth $f:M\rightarrow \Comp$, we have
    \begin{equation}
        \barDK_{f\vf{X}}\psi = -i \frac{f}{k^0}(\vf{X} \cdot \svop{K})\psi
    \end{equation}
    so $\DK$ is $f$-linear in $X$. We now check that the Leibniz rule is satisfied in $\psi$. Let $\Lambda_{\mbf{v}}$ denote a boost by $\mbf{v} \in \Real^3$. Then 
    \begin{align}
        [\barDK_{\vf{X}}(f\psi)](\kvec) &=-\frac{i}{k^0}(\vf{X}\cdot \svop{K})[(f\psi)](\kvec) \\
        &= \frac{1}{k^0}\frac{d}{dt}\Big|_{t=0} \Big[\tilde{\Sigma}(\Lambda_{-\vf{X}(\kvec)t})(f\psi)\Big](\kvec) \\
        &= \frac{1}{k^0}\frac{d}{dt}\Big|_{t=0} \Sigma(\Lambda_{-\vf{X}(\kvec)t})f\big(\Lambda_{\vf{X}(\kvec)t}\kvec\big)\psi(\Lambda_{\vf{X}(\kvec)t}) \\
        &= f(\kvec)\DK_{\vf{X}}\psi(\kvec) + \frac{\psi(\kvec)}{k^0} \frac{d}{dt}\Big|_{t=0}f\big(\Lambda_{\vf{X}(\kvec)t}\kvec\big).
    \end{align}
    Note that
    \begin{equation}
        (\Lambda_{\vf{X}(\kvec)t}\kvec)^m = \tensor{(\Lambda_{\vf{X}(\kvec)t})}{^m_\mu} k^\mu = k^m + k^0X^m(\kvec)t + O(t^2)
    \end{equation}
    so
    \begin{align}
        \barDK_{\vf{X}}(f\psi) = f\barDK_{\vf{X}}\psi+\psi(\kvec)df(\vf{X})
    \end{align}
    proving that $\barDK$ satisfies Eq. (\ref{eq:Leibniz}) and is thus a connection. Now note that $\frac{\svop{P}}{\sop{H}^2}$ is a point operator, \ie, for any functions $f,g$,
    \begin{equation}
        \Big[(f\vf{X})\cdot\frac{\svop{P}}{\sop{H}^2}\Big](g\psi) = fg(\vf{X}\cdot\frac{\svop{P}}{\sop{H}^2})\psi.
    \end{equation}
    It thus follows that
    \begin{equation}
        \DK = \barDK -i\frac{\svop{P}}{\sop{H}^2}
    \end{equation}
    also satisfies $f$-linearity in $\vf{X}$ and the Leibniz rule in $\psi$, and is thus a connection. Using Eq. (\ref{eq:comm_Ka_hn}), $\DK$ can be expressed in the manifestly anti-Hermitian form
    \begin{equation}
        \DK = -\frac{i}{2}\Big(\frac{1}{\sop{H}}\svop{K} + \svop{K}\frac{1}{\sop{H}}\Big).
    \end{equation}
    $\DK$ is rotationally invariant since $\sop{H}^{-1}\svop{K}$ and $\svop{K}\sop{H}^{-1}$ are vector operators. In the massless case, this connection induces the splitting of $\svop{J}$ into $\svop{L}^K$ and $\svop{S}^K$ according to Eqs. (\ref{eq:LD})-(\ref{eq:SD}), where
    \begin{align}
        \svop{L}^K &\doteq -i \svop{P} \times \DK \\
        &= -\frac{1}{\sop{H}}\svop{P} \times \svop{K}\\ 
        &= \svop{J}_\perp.
    \end{align}
    This implies too that $\svop{S}^K \doteq \svop{J} - \svop{L}^K =  \svop{J}_\parallel$. By Theorem \ref{thm:nogo_thm}, $\DK$ cannot be $S^2$-flat when $m=0$ and thus $\svop{L}^K$ and $\svop{S}^K$ are not angular momentum operators. It will be shown in Theorem \ref{thm:Boost_connection_not_flat} that this is true for all $m\geq 0$. 
\end{proof}
Intuitively, $\DK$ can be understood in terms of the corresponding parallel transport. If $s(t)$ is a path between $\kvec_0$ and $\kvec_1$, $\DK$ identifies the vectors in $E(\kvec_0)$ with those in $E(\kvec_1)$ by boosting along the path $s(t)$. This connection is defined for both massive and massless particles. Regardless of the particle's mass, this connection is not $S^2$-flat and thus the operators $\svop{L}^K$ and $\svop{S}^K$ induced via Eqs. (\ref{eq:LD}) and (\ref{eq:SD}) are not angular momentum operators. This is ultimately due to Wigner rotation \cite{Wigner1939}. Boosts in different directions do not commute ($[K_a,K_b] = -i\epsilon_{abc}J_c$) so boosting around a loop results in a rotation rather than an identity transformation, reflecting the non-zero curvature of this connection. 

We now show this more rigorously. The calculation of curvatures is rather delicate. We will calculate the curvature of the boost connection in two ways; the first method is simpler but does not easily generalize. The second method requires more work but allows for more complicated calculations which we will need later in this article.

Let $(\ek,\uv{e}_\theta,\uv{e}_\phi)$ be the standard spherical unit vectors
\begin{align}
    &\ek = \sin(\theta) \cos(\phi) \uv{e}_1 + \sin(\theta) \sin(\phi) \uv{e}_2 + \cos(\theta) \uv{e}_3\\
    &\uv{e}_\theta = \cos(\theta)\cos(\phi)\uv{e}_1 + \cos(\theta)\sin(\phi)\uv{e}_2 - \sin(\theta)\uv{e}_3 \\
    &\uv{e}_\phi = -\sin(\phi)\uv{e}_1 + \cos(\phi)\uv{e}_2.
\end{align}
It will be useful to use the radial cylindrical unit vector $\uv{e}_R$ and its relation to the spherical unit vectors:
\begin{equation}
    \uv{e}_R = \cos(\phi)\uv{e}_1 + \sin(\phi)\uv{e}_2 = \sin(\theta)\khat + \cos(\theta)\uv{e}_\theta.
\end{equation}
At $\kvec_0$, let $(\uv{u}_1,\uv{u}_2,\uv{u}_3 = \khat_0)$ be an orthonormal basis. We will extend $\uv{u}_a$ as constant vector fields on all of $M$, specifically, $\uv{u}_3(\kvec) \neq \ek$ for general $\kvec \neq \kvec_0$.
\begin{theorem}\label{thm:Boost_connection_not_flat}
    For particles of any mass, the boost connection $\DK$ is not $S^2$-flat, in particular, the $S^2$-curvature at $\kvec_0$ is given by
    \begin{equation}
        F_K(\uv{u}_1,\uv{u}_2) = \frac{i}{\sop{H}^2}\sop{J}_{3}.
    \end{equation}
    This can also be expressed as
    \begin{equation}\label{eq:boost_curvature}
        F_K(\etheta,\ephi) = \frac{i}{\sop{H}^2}\sop{J}_k.
    \end{equation}
    Since the connection is not $S^2$-flat, the operators
    \begin{align}
         \svop{L}^K &\doteq -i \svop{P} \times \DK \\
         \svop{S}^K &\doteq \svop{J} - \svop{L}^K
    \end{align}
    are not angular momentum operators.
\end{theorem}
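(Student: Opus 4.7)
The strategy is to compute $F_K$ directly from its definition (\ref{eq:curvature}) and then exploit the tensoriality of curvature to simplify at $\kvec_0$, with the Wigner rotation commutator $[\sop{K}_a,\sop{K}_b] = -i\epsilon_{abc}\sop{J}_c$ providing the source of the obstruction. Since $\uv{u}_1,\uv{u}_2$ are extended as constant vector fields, their Jacobi--Lie bracket vanishes and
\begin{equation}
    F_K(\uv{u}_1,\uv{u}_2) = [\DK_{\uv{u}_1},\DK_{\uv{u}_2}].
\end{equation}
Using the form $\DK_a = -i\sop{K}_a/\sop{H} - \sop{P}_a/(2\sop{H}^2)$ equivalent to Eq.~(\ref{eq:DK_b}), expand this commutator bilinearly into four pieces. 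The pure-$\svop{P}$ piece vanishes since $\svop{P}$ and $\sop{H}$ mutually commute. The two cross $\svop{K}$--$\svop{P}$ pieces, evaluated using $[\sop{K}_a,\sop{P}_b] = i\delta_{ab}\sop{H}$, $[\sop{K}_a,\sop{H}] = i\sop{P}_a$, and the derived identity $[\sop{K}_a,1/\sop{H}^2] = -2i\sop{P}_a/\sop{H}^3$, cancel in antisymmetric pairs. The remaining $-[\sop{K}_a/\sop{H},\sop{K}_b/\sop{H}]$ piece is the essential contribution: combining $\sop{K}_a\sop{H}^{-1} = \sop{H}^{-1}\sop{K}_a - i\sop{P}_a\sop{H}^{-2}$ with the Wigner rotation relation yields
\begin{equation}
    F_K(\uv{u}_a,\uv{u}_b) = \frac{i\epsilon_{abc}\sop{J}_c}{\sop{H}^2} + \frac{i(\sop{P}_a\sop{K}_b - \sop{P}_b\sop{K}_a)}{\sop{H}^3}.
\end{equation}

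The final step is to evaluate at $\kvec_0$ with $\uv{u}_3 = \khat_0$ and $(a,b) = (1,2)$. By tensoriality of $F_K$ (\cite{Tu2017differential}, Sec. 10.3), only the values of $\uv{u}_1,\uv{u}_2$ at $\kvec_0$ matter, and the result is a pointwise operator on $E(\kvec_0)$. Since $\svop{P}$ acts by multiplication by $\kvec$ and $\uv{u}_1\cdot\kvec_0 = \uv{u}_2\cdot\kvec_0 = 0$, the second term vanishes at $\kvec_0$, leaving
\begin{equation}
    F_K(\uv{u}_1,\uv{u}_2) = \frac{i}{\sop{H}^2}\sop{J}_3,
\end{equation}
where $\sop{J}_3 = \khat_0\cdot\svop{J}$ restricts to a valid pointwise operator on $E(\kvec_0)$ because rotations about $\khat_0$ fix $\kvec_0$. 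The spherical-frame identity $F_K(\etheta,\ephi) = \frac{i}{\sop{H}^2}\sop{J}_k$ then follows by applying the same computation at each $\kvec$ using the right-handed orthonormal triple $(\etheta,\ephi,\khat)$. Since the curvature is nonzero, Theorem~\ref{thm:curvature_theorem} immediately yields that $\svop{L}^K$ and $\svop{S}^K$ fail to satisfy the $\so(3)$ commutation relations.

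The main obstacle is reconciling the $\sop{P}_a\sop{K}_b$ term, which as an operator on $L^2(E)$ is manifestly not a pointwise multiplication (since $\svop{K}$ contains derivatives), with the tensoriality of the curvature. The resolution is that $\svop{P}$ acts multiplicatively by $\kvec$, so the deliberate choice $\uv{u}_1,\uv{u}_2\perp\khat_0$ forces this multiplicative prefactor to vanish at $\kvec_0$, killing the derivative contributions of $\sop{K}_{1,2}\psi$ that would otherwise appear. This is precisely why the theorem is phrased using an orthonormal frame adapted to $\khat_0$ rather than an arbitrary one.
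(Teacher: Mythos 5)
Your proof is correct and follows essentially the same route as the paper's first proof: exploit the vanishing Jacobi--Lie bracket of constant fields to reduce $F_K(\uv{u}_1,\uv{u}_2)$ to $[\DK_1,\DK_2]$, expand using the \Poincare algebra, and invoke tensoriality of $F_K$ to evaluate at $\kvec_0$ where $\sop{P}_1 = \sop{P}_2 = 0$. One small improvement in your version: because you use the correctly symmetrized coefficient $\sop{P}_a/(2\sop{H}^2)$ from Eq.~(\ref{eq:DK_b}), the $\svop{K}$--$\svop{P}$ cross terms cancel identically as operators, not merely at $\kvec_0$; the paper's Proof~1 carries an unsymmetrized $\sop{P}_a/\sop{H}^2$ and the cross pieces only drop out upon pointwise evaluation. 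Your closing discussion of why $\khat_0\cdot\svop{J}$ yields a genuine pointwise operator at $\kvec_0$ (rotations about $\khat_0$ fix that fiber) is a worthwhile remark that the paper leaves implicit. The paper also provides a second, global spherical-coordinate proof of the same result, which you do not attempt but also did not need.
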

\begin{proof}[Proof 1]
This first proof uses Cartesian components of the generators. We will calculate the curvature at some arbitrary but fixed $\kvec_0$. We orient our axes so that $\sop{K}_a = \uv{u}_a \cdot \svop{K}$ and similarly for other generators. The Cartesian components of $\DK$ are
\begin{equation}
    \DK_a \doteq \DK_{\uv{u}_a} = -i\Big(\frac{\sop{K}_a}{\sop{H}} - i\frac{\sop{P}_a}{\sop{H}^2}\Big).
\end{equation}
Since the $\uv{u}_a$ are constant vector fields, their Jacobi-Lie bracket vanishes: $[\uv{u}_a,\uv{u}_b] = 0$. Thus
\begin{align}
    F_K(\uv{u}_1,\uv{u}_2) &= [\DK_1,\DK_2] \\
    &= -\Big[ \frac{1}{\sop{H}}\sop{K}_1 - i\frac{1}{\sop{H}^2}\sop{P}_1 , \frac{1}{\sop{H}}\sop{K}_2 - i\frac{1}{\sop{H}^2}\sop{P}_2\Big].
\end{align}
We are interested only in calculating this quantity at $\kvec_0$. We note that while $F_K(\uv{u}_1,\uv{u}_2)$ and $P_a$ are defined pointwise, this is not true of the operators $\DK_a$, $K_a$, and $J_a$ which are locally defined. Thus, even though $\sop{P}_1 = 0$ at $\kvec_0$, $[P_1,J_2] = iP_3 \neq 0$, illustrating that we must evaluate the commutators before evaluating pointwise. By Eq. (\ref{eq:comm_Ka_hn}),
\begin{align}
    [\sop{H}^{-1}\sop{K}_1,\sop{H}^{-1}\sop{K}_2] &= \sop{H}^{-2}[\sop{K}_1,\sop{K}_2] + \sop{H}^{-1}[\sop{K}_1,\sop{H}^{-1}]\sop{K}_2 + \sop{H}^{-1}[\sop{H}^{-1},\sop{K}_2]\sop{K}_1 \\
    &= -i\sop{H}^{-2}\sop{J}_3 + i\sop{H}^{-3}(-\sop{P}_1\sop{K}_2 + \sop{P}_2\sop{K}_1).
\end{align}
By Eq. (\ref{eq:comm_id_hm_k_hn_p}),
\begin{align}
    [\sop{H}^{-1}\sop{K}_1,\sop{H}^{-2}\sop{P}_2]  &= -2i\sop{H}^{-4}\sop{P}_1\sop{P}_2 \\
    [\sop{H}^{-2}\sop{P}_1,\sop{H}^{-1}\sop{K}_2]  &= i\sop{H}^{-4}\sop{P}_1\sop{P}_2
\end{align}
Thus, evaluating $F_K(\uv{u}_1,\uv{u}_2)$ at $\kvec_0$ where $\sop{P}_1=\sop{P}_2 = 0$, we have
\begin{alignat}{3}
    F_K(\uv{u}_1,\uv{u}_2) &= &&-[\sop{H}^{-1}\sop{K}_1, \sop{H}^{-1}\sop{K}_2]\\
    & &&+i[\sop{H}^{-1}\sop{K}_1,\sop{H}^{-2}\sop{P}_2] + i[\sop{H}^{-2}\sop{P}_1,\sop{H}^{-1}\sop{K}_2] \\
    &=&&i\sop{H}^{-2}\sop{J}_3.
\end{alignat}
This result holds for any orthonormal basis with $\uv{u}_3 = \kvec_0$, so we can choose $\uv{u}_1 = \etheta(\kvec_0)$ and $\uv{u}_2 = \ephi(\kvec_0)$. By varying $\kvec_0$ we obtain 
\begin{equation}
    F_K(\etheta,\ephi) = i\sop{H}^{-2}J_k.
\end{equation}
\end{proof}
\begin{proof}[Proof 2]
    We can instead directly do the global calculation
    \begin{equation}\label{eq:FK_theta_phi_def}
    F_K(\etheta,\ephi) = [\DK_{\etheta},\DK_{\ephi}] - \DK_{[\etheta,\ephi]}.
    \end{equation}
    This requires more work but there are  critical advantages of this method which we will later exploit. One computation advantage is immediately apparent. If we let $\sop{P}_\theta = \etheta \cdot \svop{P}$, $\sop{P}_{\phi} = \ephi \cdot \svop{P}$, and similarly for the other generators, then we have
    \begin{equation}
        [\sop{P}_\theta\psi](\kvec) = (\etheta \cdot \kvec) \psi(\kvec) = 0.
    \end{equation}
    So globally $\sop{P}_\theta = 0$ and likewise $\sop{P}_\phi = 0$. Therefore,
    \begin{subequations}\label{eq:K_conn_spherical}
    \begin{align}
        \DK_{\etheta} &= -i\sop{H}^{-1}\sop{K}_\theta = -\frac{i}{\sop{H}}(\cos\theta \cos\phi \sop{K}_1 + \cos \theta \sin \phi \,\sop{K}_2 - \sin \theta \, \sop{K}_3)\\
        \DK_{\ephi} &= -i\sop{H}^{-1}\sop{K}_\phi = -\frac{i}{\sop{H}}(-\sin\phi \,\sop{K}_1 + \cos \phi \,\sop{K}_2).
    \end{align}
    \end{subequations}
    The most important advantage is that while the Cartesian boost generators $\sop{K}_a$ do not commute with $|\svop{P}|$ and $\sop{H}$, the spherical boosts $\sop{K}_{\theta,\phi}$ do. Indeed, using Eq. (\ref{eq:comm_Ka_P}) we have
    \begin{align}
        [\sop{K}_{\theta,\phi},|\svop{P}|] &= \uv{e}_{\theta,\phi} \cdot[\svop{K},|\svop{P}|] = i\frac{\sop{H}}{|\svop{P}|}\sop{P}_{\theta,\phi} = 0 \\
        [\sop{K}_{\theta,\phi},\sop{H}] &= \uv{e}_{\theta,\phi} \cdot [\svop{K},\sop{H}] = i\sop{P}_{\theta,\phi} = 0
    \end{align}

    The Jacobi-Lie bracket of $\etheta$ and $\ephi$ is
    \begin{equation}\label{eq:JL_theta_phi}
        [\etheta,\ephi] = -\frac{\cot{\theta}}{\kmag}\ephi.
    \end{equation}
    Thus,
    \begin{equation}\label{eq:DK_JL_theta_phi}
        \DK_{[\etheta,\ephi]} =  -\frac{\cot \theta}{\kmag}\DK_{\ephi} = \frac{i \cot \theta}{\kmag \sop{H}}\sop{K}_\phi.
    \end{equation}
    We now turn to the remaining term in Eq. (\ref{eq:FK_theta_phi_def}). Using Eqs. (\ref{eq:K_conn_spherical}) and the Leibniz rule, we obtain
    \begin{alignat}{2}
        \DK_\phi\DK_\theta\psi &= -\frac{i}{\sop{H}}&&\DK_\phi(\cos{\theta}\cos{\phi}\,\sop{K}_1\psi + \cos\theta \sin \phi \,\sop{K}_2\psi -\sin\theta\,\sop{K}_3\psi) \\
        &= -\frac{i}{h}&&\Big\{\cos\theta \cos\phi \, \DK_\phi(\sop{K}\psi) + \frac{\cos \theta}{\sin \theta\,\kmag}\frac{\partial}{\partial \phi}(\cos\phi)(\sop{K}_1\psi) \\
        & &&+\cos \theta \sin \phi \,\DK_\phi (\sop{K}_2\psi) + \frac{\cos \theta}{\sin\theta \,\kmag}\frac{\partial}{\partial \phi}(\sin \phi)(\sop{K}_2\psi) \nonumber\\
        & &&-\sin\theta \DK_\phi(\sop{K}_3 \psi) \Big\} \nonumber\\
        &=\frac{1}{\sop{H}^2}&&\Big(\cos\theta \cos \phi \sin \phi \,\sop{K}_1^2 - \cos \theta \cos^2\phi \,\sop{K}_2\sop{K}_1 \label{eq:DK_phi_theta} \\
        & &&+ \cos \theta \sin^2 \phi \, \sop{K}_1\sop{K}_2 - \cos\theta \cos \phi \sin \phi \, \sop{K}_2^2  \nonumber\\
        & &&-\sin \theta \sin \phi \, \sop{K}_1 \sop{K}_3 + \sin \theta \cos \phi \, \sop{K}_2 \sop{K}_3\Big)\psi + \frac{\cot \theta}{\kmag}\DK_{\ephi}\psi \nonumber 
    \end{alignat}
    By the same method we find

    \begin{alignat}{1}
        \DK_\theta \DK_\phi = \frac{1}{\sop{H}^2}&\Big(\cos\theta \cos \phi \sin\phi\,\sop{K}_1^2 + \cos\theta \sin^2 \phi \sop{K}_2\sop{K}_1 - \sin\theta \sin\phi \,\sop{K}_3\sop{K}_1 \label{eq:DK_theta_phi}\\
        &-\cos \theta \cos^2 \phi \, \sop{K}_1 \sop{K}_2 - \cos \theta \cos \phi \sin \phi \sop{K}_2^2 + \sin \theta \cos \phi \sop{K}_3 \sop{K}_2\Big). \nonumber
    \end{alignat}

    From Eqs. (\ref{eq:DK_phi_theta}) and (\ref{eq:DK_theta_phi}) we obtain
    \begin{alignat}{2}
        [\DK_\theta,\DK_\phi] &= \frac{1}{\sop{H}^2}\Big(&&-\cos\theta\sin^2\phi\,[\sop{K}_1,\sop{K}_2] - \cos\theta \cos^2\phi\,[K_1,K_2] \nonumber\\
        & &&-\sin\theta \sin\phi \, [\sop{K}_3,\sop{K}_1] + \sin\theta \cos \phi [\sop{K}_3,\sop{K}_2]\Big) - \frac{\cot\theta}{\kmag}\DK_\phi \nonumber \\ 
        &=\frac{i}{\sop{H}^2}(&&\cos\theta \, \sop{J}_3 + \sin\theta \sin \phi\, \sop{J}_2 + \sin\theta \cos \phi \sop{J}_1) - \frac{\cot\theta}{\kmag}\DK_\phi \nonumber \\
        &=\frac{i}{\sop{H}^2}&&\sop{J}_k- \frac{\cot\theta}{\kmag}\DK_\phi. \label{eq:comm_DK_theta_DK_phi}
    \end{alignat}
    Then from Eqs. (\ref{eq:FK_theta_phi_def}), (\ref{eq:DK_JL_theta_phi}),and (\ref{eq:comm_DK_theta_DK_phi}) we obtain
    \begin{equation}
        F_K(\etheta,\ephi) = \frac{i}{\sop{H}^2}\sop{J}_k.
    \end{equation}
\end{proof}

\subsection{The rotation connection for elementary particles}
We have seen that any elementary particle bundle is endowed with a natural anti-Hermitian connection $\DK$ resulting from the \Poincare action, particularly from the boost transformations. However, there is another such connection induced by the rotations. Indeed, for a path along a spherical (fixed energy) surface in $\kvec$ space, one can transport vectors along the path by applying infinitesimal rotations rather than infinitesimal boosts. If the path involves changes in energy, \ie, radial changes in $k$-space, these are still accomplished by boosts. This gives the intuitive description of the following connection.
\begin{theorem}[Rotation connection]
    Suppose $E$ is a particle bundle with mass $m \geq 0$. Let $\tilde{E}$ be the restriction of this bundle to the base manifold $\pspace$; note that $\tilde{E} = E$ for massless particles. Then
    \begin{align}
        \DR &= -i
        \Bigg(\frac{\phats \times \svop{J}}{|\svop{P}|} +\frac{1}{\sop{H}}\phats(\phats\cdot\svop{K}) - \frac{i\svop{P}}{2\sop{H}^2}\Bigg) \label{eq:DR_1}\\
        &= -i
        \Bigg(\frac{\phats \times \svop{J}}{|\svop{P}|} - \frac{i\phats}{|\svop{P}|}\Bigg)-\frac{i}{2}\Bigg(\frac{1}{\sop{H}}\phats(\phats\cdot\svop{K}) +(\svop{K}\cdot \phats)\phats\frac{1}{\sop{H}}\Bigg) \label{eq:DR_2}
    \end{align}
    is an anti-Hermitian connection on $\tilde{E}$, which we call the rotation connection.
\end{theorem}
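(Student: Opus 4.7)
The plan is to show that $\DR$ differs from the boost connection $\DK$ (established in the previous theorem) by a point operator on $L^2(\tilde{E})$; the connection property then transfers to $\DR$ for free, anti-Hermiticity is read off from the symmetrized form (\ref{eq:DR_2}), and the equivalence of the two displayed expressions reduces to an operator identity. The restriction of the base from $M$ to $\pspace$ (so that $\tilde{E}$ replaces $E$) is required because $\phats$ and $|\svop{P}|^{-1}$ are singular at $\kvec=0$; for massless particles this is already the domain of $E$, while for massive particles it simply excludes the rest frame.

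I would first establish the equivalence of (\ref{eq:DR_1}) and (\ref{eq:DR_2}). After subtracting and cancelling the common $-i(\phats\times\svop{J})/|\svop{P}|$, the identity reduces to
\begin{equation}
\frac{i}{2}\!\left[(\svop{K}\cdot\phats)\phats\sop{H}^{-1} - \sop{H}^{-1}\phats(\phats\cdot\svop{K})\right] = \frac{\svop{P}}{2\sop{H}^{2}} - \frac{\phats}{|\svop{P}|},
\end{equation}
which can be verified using $[\sop{K}_a,\sop{P}_b]=i\delta_{ab}\sop{H}$ and $[\sop{K}_a,\sop{H}]=i\sop{P}_a$ from the Appendix, together with the easily-derived consequences $(\phats\cdot\svop{K})\phats = \phats(\phats\cdot\svop{K})$ and $[\sop{H}^{-1},\phats\cdot\svop{K}]=i|\svop{P}|/\sop{H}^{2}$. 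The extra $-i\phats/|\svop{P}|$ appearing in (\ref{eq:DR_2}) is exactly the anti-Hermitian part of $\phats\times\svop{J}$, a consequence of the adjoint identity $(\phats\times\svop{J})^{\dagger} = \phats\times\svop{J} - 2i\phats$, which in turn follows from $[\hat{P}_b,\sop{J}_c]=i\epsilon_{bcd}\hat{P}_d$.

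The central step is to compute
\begin{equation}
\DR - \DK = -i\frac{\phats\times\svop{J}}{|\svop{P}|} + i\frac{\svop{K} - \phats(\phats\cdot\svop{K})}{\sop{H}},
\end{equation}
in which the $-\svop{P}/(2\sop{H}^{2})$ pieces cancel, and then to show that this difference is a point operator on $L^2(\tilde{E})$. By Lemma 7.23 of Ref.\ \cite{Tu2017differential} it suffices to show that the difference commutes with multiplication by every smooth $f:\pspace\to\Comp$. The Leibniz rule for $\DK$ gives $[\svop{K},f]=i\sop{H}\nabla f$, hence $[i(\svop{K}-\phats(\phats\cdot\svop{K}))/\sop{H},f] = -\nabla_\perp f$, where $\nabla_\perp f \doteq \nabla f - \phats(\phats\cdot\nabla f)$. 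In parallel, the rotation Leibniz rule $[\svop{J},f]=-i\kvec\times\nabla f$ combined with the BAC-CAB identity $\phats\times(\kvec\times\nabla f) = \kvec(\phats\cdot\nabla f) - |\kvec|\nabla f$ yields $[-i\phats\times\svop{J}/|\svop{P}|,f] = +\nabla_\perp f$. The two commutators cancel exactly, so $\DR - \DK$ is a point operator.

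Given this, $\DR = \DK + (\text{point operator})$ is a connection: $f$-linearity in $\vf{X}$ is manifest in (\ref{eq:DR_1}) since every term is of the form $\vf{X}\cdot(\text{momentum-dependent operator})$, and the Leibniz correction $df(\vf{X})\psi$ is produced entirely by the $\DK$ summand while the point-operator summand commutes past the scalar factor. Anti-Hermiticity is then established directly from (\ref{eq:DR_2}): the radial-boost summand is $-i/2$ times the manifestly Hermitian symmetric sum $\sop{H}^{-1}\phats(\phats\cdot\svop{K}) + (\svop{K}\cdot\phats)\phats\sop{H}^{-1}$, while the tangential summand equals $-i|\svop{P}|^{-1}(\phats\times\svop{J} - i\phats)$, in which $\phats\times\svop{J} - i\phats$ is the Hermitian part of $\phats\times\svop{J}$ (by the adjoint identity above) and commutes with $|\svop{P}|$ by rotational invariance. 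I expect the main obstacle to be the careful operator-ordering bookkeeping, particularly in verifying the two-forms identity and in the BAC-CAB manipulation, since $\phats$ does not commute with $\svop{J}$ or with $\nabla f$ and every cross product and symmetrization must be handled with attention to which generators sit on the left or right of the momentum-dependent factors.
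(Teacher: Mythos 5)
Your proof is correct, and it takes a genuinely different route from the paper's for the core claim that $\DR$ is a connection. The paper verifies the Leibniz rule for $\barDR$ \emph{directly}, by splitting the vector field argument $\vf{X}$ into radial and tangential components: for $\vf{X}=\ek$ the $\phats\times\svop{J}$ term vanishes and the paper reduces to the already-established boost connection $\barDK$, while for tangential $\vf{X}^\perp$ the $\phats(\phats\cdot\svop{K})$ term vanishes and the required Leibniz correction $df(\vf{X}^\perp)\psi$ is extracted by differentiating the rotation action $\tilde\Sigma_R$ on $f\psi$. You instead use the general principle that a connection plus a tensorial ($f$-bilinear) correction is again a connection, and show that $\DR-\DK$ is such a correction. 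Your key computation — deriving $[\svop{K},f]=i\sop{H}\nabla f$ from the boost connection's Leibniz rule, $[\svop{J},f]=-i\kvec\times\nabla f$ from the rotation representation, and applying BAC--CAB (valid here since $\phats$, $\kvec$, $\nabla f$ are all multiplication operators) — is right, and the two resulting $\pm\nabla_\perp f$ terms cancel exactly. This is a cleaner packaging of the same underlying content: the boost connection controls radial directions, the $\phats\times\svop{J}$ piece swaps the tangential part of $\DK$ for rotation generators, and the tensor-difference observation shows the swap leaves the Leibniz structure intact without an explicit case split. Both proofs lean on the boost-connection theorem, but yours uses it once and uniformly rather than only for the radial direction. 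The equivalence of the two displayed forms (I checked your reduced commutator identity $\tfrac{i}{2}[(\svop{K}\cdot\phats)\phats\sop{H}^{-1}-\sop{H}^{-1}\phats(\phats\cdot\svop{K})] = \tfrac{\svop{P}}{2\sop{H}^2}-\tfrac{\phats}{|\svop{P}|}$ and it follows from $[\phats\cdot\svop{K},\sop{H}^{-1}]=-i|\svop{P}|/\sop{H}^2$, $\svop{K}\cdot\phats-\phats\cdot\svop{K}=2i\sop{H}/|\svop{P}|$, and $[\hat{P}_a,\phats\cdot\svop{K}]=0$) and the anti-Hermiticity step via the adjoint identity $(\phats\times\svop{J})^\dagger=\phats\times\svop{J}-2i\phats$ are handled the same way as in the paper. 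One small imprecision: you call the $-i\phats$ in $\phats\times\svop{J}-i\phats$ ``the anti-Hermitian part,'' when really $i\phats$ is the anti-Hermitian part of $\phats\times\svop{J}$ and the subtraction \emph{removes} it; the math is right, only the phrasing is inverted.
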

\begin{proof}
    We first show that
    \begin{equation}\label{eq:DR_bar}
        \barDR \doteq -i\Bigg(\frac{\phats \times \svop{J}}{|\svop{P}|} +\frac{1}{\sop{H}}\phats(\phats\cdot\svop{K})\Bigg)
    \end{equation}
    is a connection. It is trivial that $\barDR_{f\vf{X}}\psi = f\bar{\DR}_{\vf{X}} \psi$. It remains to check that the Leibniz rule (\ref{eq:Leibniz}) holds . It suffices to check this when $\vf{X}$ is everywhere orthogonal or parallel to $\uv{e}_k$. In the former case, we take $\vf{X}^\perp$ to an arbitrary vector field such that $\vf{X}^\perp(\kvec) \cdot \uv{e}_k = 0$ for all $\kvec$. Then
    \begin{equation}
        \Big[\vf{X}^\perp\cdot\big[\frac{1}{\sop{H}}\phats(\phats\cdot\svop{K})\big]\psi \Big](\vf{k}) = [\sop{H}^{-1}(\vf{X}^\perp \cdot\ek)(\ek \cdot \svop{K})]\psi = 0
    \end{equation}
    so
    \begin{align}
        [\barDR_{\vf{X}^\perp}(f\psi)](\kvec) &= -\frac{i}{\kmag}\big(\vf{X}^\perp\cdot(\ek \times \svop{J})\big)[f\psi](\kvec) \\
        &=-\frac{i}{\kmag}\big((\vf{X}^\perp \times \ek)\cdot \svop{J} \big)[f\psi](\kvec) \\
        &=\frac{1}{\kmag} \Bigg[\frac{d}{dt}\Big|_{t=0}e^{-i(\vf{X}^\perp\times\ek)\cdot \svop{J}t}(f\psi)\Bigg](\kvec) \\ 
        &=\frac{1}{\kmag} \frac{d}{dt}\Big|_{t=0}\tilde{\Sigma}\big(R_{\vf{X}^\perp\times \ek}(t)\big)[f\psi](\kvec) \\ 
        &=\frac{1}{\kmag} \frac{d}{dt}\Big|_{t=0}f\big(R_{\vf{X}^\perp\times \ek}(-t)\kvec\big)\Sigma\big(R_{\vf{X}^\perp\times \ek}(t)\big)\psi\big(R_{\vf{X}^\perp\times \ek}(-t)\kvec\big) \\ 
        &= f(\kvec)\DR_{\vf{X}^\perp}\psi(\kvec) + \frac{\psi(\kvec)}{\kmag}\frac{d}{dt}\Big|_{t=0}f\big(\kvec - t(\vf{X}^\perp\times \ek) \times \kvec + O(t^2) \big) \\ 
        &= f(\kvec)\DR_{\vf{X}^\perp}\psi(\kvec) + \frac{\psi(\kvec)}{\kmag}\frac{d}{dt}\Big|_{t=0}f\big(\kvec + t\kmag\vf{X}^\perp \big) \\ 
        &= f(\kvec)\DR_{\vf{X}^\perp}\psi(\kvec) + \psi(\kvec) df(\vf{X}^\perp)(\kvec).
    \end{align}
    This shows that the Leibniz rule holds when $\vf{X} = \vf{X}^\perp$. We now consider the parallel case, where $\vf{X} = \ek$. In this case, dotting $\ek$ into Eq. (\ref{eq:DR_bar}) kills the first term. Then using the fact that $\barDK$ defined in Eq. (\ref{eq:DK_bar}) is a connection, we have
    \begin{align}
        [\barDR_{\ek}(f\psi)](\kvec) &= -\frac{i}{\sop{H}}(\ek\cdot \svop{K})[f\psi] =\barDK_{\ek}(f\psi) \\
        & = f\barDK_{\ek}\psi(\kvec) + \psi(\kvec)df(\ek) \\
        &=  f\barDR_{\ek}\psi(\kvec) + \psi(\kvec)df(\ek),
    \end{align}
    showing that $\barDR$ is a connection. Since 
    \begin{equation}
    -\frac{i\svop{P}}{2\sop{H}^2}(f\psi) = -f\frac{i\svop{P}}{2\sop{H}^2}(\psi)
    \end{equation}
    it follows that
    \begin{equation}
        \DK = \barDK -\frac{i\svop{P}}{2\sop{H}^2}
    \end{equation}
    also satisfies the Leibniz rule and is thus a connection. By application of the \Poincare commutation relations, we have
    \begin{equation}
        \frac{1}{\sop{H}}\phats (\phats \cdot \svop{K}) = (\svop{K}\cdot \phats)\phats \frac{1}{\sop{H}} + \frac{i\svop{P}}{\sop{H}^2} - \frac{2i\phats}{|\svop{P}|}
    \end{equation}
    from which the equivalence of Eqs. (\ref{eq:DR_1}) and (\ref{eq:DR_2}) follows.

    $\DR$ is an anti-Hermitian connection if
    \begin{equation}\label{eq:QK_proof}
        \QR \doteq i\DR = \frac{1}{|\svop{P}|^2}
        \Bigg(\svop{P} \times \svop{J} - i\svop{P}\Bigg)+\frac{1}{2}\Bigg(\frac{1}{\sop{H}}\phats(\phats\cdot\svop{K}) +(\svop{K}\cdot \phats)\phats\frac{1}{\sop{H}}\Bigg)
    \end{equation}
    is Hermitian. The second term in parenthesis is clearly Hermitian. We have that
    \begin{align}
        (\svop{P} \times \svop{J})^\dagger_a &= \epsilon_{abc}(\sop{P}_b\sop{J}_c)^\dagger = \epsilon_{abc}\sop{J}_c\sop{P}_b \\
        &= \epsilon_{abc}(\sop{P}_b\sop{J}_c - [\sop{P}_b,\sop{J}_c]) \\
        &= \epsilon_{abc}(\sop{P}_b\sop{J}_c - i\epsilon_{bcd}\sop{P}_d) \\
        &= \epsilon_{abc}\sop{P}_a\sop{J}_b - 2i\sop{P}_a
    \end{align}
    so
    \begin{equation}
        (\phats \times \svop{J})^\dagger = (\phats \times \svop{J}) - 2i\phats
    \end{equation}
    so the first term in Eq. (\ref{eq:QK_proof}) is also Hermitian. Therefore $\QK$ is Hermitian.
\end{proof}
The rotation connection induces another splitting of the angular momentum via Eqs. (\ref{eq:LD}) and (\ref{eq:SD}) with
\begin{align}\label{eq:LR}
    &\svop{L}^R = -i\phats \times (\phats\times\svop{J}).
\end{align}
Note that while $\DR$ has a $|\svop{P}|^{-1}$ singularity at the origin (for massive particles), $\svop{L}^R$ does not suffer from a diverging singularity. However, $\phats$ is still discontinuous at $\kvec = 0$, so even if $\psi$ is smooth, $\svop{L}^R\psi$ will typically be discontinuous at the origin. This does not cause fundamental mathematical difficulties since $\psi$ is only defined almost-everywhere in the $L^2$ theory, but it is still unusual from a physical perspective and one does not expect such singularities in physically meaningful operators. It is thus unsurprising that we will find $\DR$ is also not $S^2$-flat and thus $\svop{L}^R$ and $\svop{S}^R$ are not angular momentum operators.
\begin{theorem}
    The rotation connection is not $S^2$-flat, in particular,
    \begin{equation}\label{eq:rotation_curvature}
        F^R(\etheta,\ephi) = \frac{i\sop{J}_k}{|\svop{P}|^2}.
    \end{equation}
    Thus, this connection does not induce an SAM-OAM splitting of the total angular momentum by Theorem \ref{thm:curvature_theorem}.
\end{theorem}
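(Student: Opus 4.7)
The plan is to follow the template of Proof 2 of Theorem \ref{thm:Boost_connection_not_flat} and compute
\begin{equation}
F^R(\etheta, \ephi) = [\DR_\etheta, \DR_\ephi] - \DR_{[\etheta, \ephi]}
\end{equation}
directly as a global operator identity on the spherical frame. The first observation is that $\DR$ simplifies drastically on $\etheta, \ephi$: since $\etheta \cdot \phats = \ephi \cdot \phats = 0$ and $\svop{P} = |\svop{P}|\phats$ is parallel to $\phats$, both the parallel-boost term $\sop{H}^{-1}\phats(\phats \cdot \svop{K})$ and the $\svop{P}/(2\sop{H}^2)$ correction in Eq. (\ref{eq:DR_1}) vanish upon contraction with $\etheta$ or $\ephi$. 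Only the transverse rotational piece $-i(\phats \times \svop{J})/|\svop{P}|$ survives; the scalar triple product with $\etheta \times \ek = -\ephi$ and $\ephi \times \ek = \etheta$ then yields
\begin{equation}
\DR_\etheta = \frac{i}{|\svop{P}|}\sop{J}_\phi, \qquad \DR_\ephi = -\frac{i}{|\svop{P}|}\sop{J}_\theta,
\end{equation}
where $\sop{J}_\phi \doteq \ephi \cdot \svop{J}$ and $\sop{J}_\theta \doteq \etheta \cdot \svop{J}$.

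Next, I would use the Jacobi--Lie bracket Eq. (\ref{eq:JL_theta_phi}) together with $f$-linearity of $\DR$ to read off
\begin{equation}
\DR_{[\etheta,\ephi]} = -\frac{\cot\theta}{|\svop{P}|}\DR_\ephi = \frac{i\cot\theta}{|\svop{P}|^2}\sop{J}_\theta.
\end{equation}
Since $|\svop{P}|$ is rotationally invariant and the spherical-frame coefficients $\sin\phi, \cos\phi, \sin\theta, \cos\theta$ are purely angular, $1/|\svop{P}|$ commutes with both $\sop{J}_\theta$ and $\sop{J}_\phi$, so $[\DR_\etheta, \DR_\ephi] = [\sop{J}_\phi, \sop{J}_\theta]/|\svop{P}|^2$. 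The problem thus reduces to computing the single commutator $[\sop{J}_\phi, \sop{J}_\theta]$, which I expect to be the main technical step.

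For that commutator, expand $\sop{J}_\phi$ and $\sop{J}_\theta$ into Cartesian pieces and apply the Leibniz identity $[a_i\sop{J}_i, b_j\sop{J}_j] = a_i b_j[\sop{J}_i,\sop{J}_j] + a_i[\sop{J}_i, b_j]\sop{J}_j - b_j[\sop{J}_j, a_i]\sop{J}_i$ (using $[a_i, b_j]=0$ for functions of $\kvec$), together with $[\sop{J}_a, \sop{J}_b] = i\epsilon_{abc}\sop{J}_c$ and the identity $[\sop{J}_a, f] = -i(\kvec\times\nabla f)_a$ valid for any smooth scalar $f$ on $\pspace$ (this identity follows directly from the orbital action of rotations on scalars). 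The ``algebraic'' contributions collapse via $i(\ephi \times \etheta)\cdot\svop{J} = -i\sop{J}_k$, while the ``derivative'' contributions---obtained by taking $\nabla$ of each angular coefficient and contracting with $\kvec$---combine into $2i\sop{J}_k + i\cot\theta\,\sop{J}_\theta$, giving
\begin{equation}
[\sop{J}_\phi, \sop{J}_\theta] = i\sop{J}_k + i\cot\theta\,\sop{J}_\theta.
\end{equation}
The $\cot\theta\,\sop{J}_\theta$ piece cancels exactly against $\DR_{[\etheta,\ephi]}$, leaving $F^R(\etheta, \ephi) = i\sop{J}_k/|\svop{P}|^2$; by Theorem \ref{thm:curvature_theorem} this non-vanishing curvature means $\svop{L}^R$ and $\svop{S}^R$ are not angular momentum operators. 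The anticipated difficulty is entirely in the bookkeeping of this final commutator, where the precise cancellations between algebraic and derivative pieces must be tracked; structurally the result is the rotational analog of Eq. (\ref{eq:boost_curvature}), with $1/\sop{H}^2$ replaced by $1/|\svop{P}|^2$---the natural area element on the momentum sphere of radius $|\svop{P}|$ along which the rotation connection parallel-transports by rotations.
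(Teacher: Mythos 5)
Your proposal is correct and follows essentially the same route the paper indicates (``the second proof technique used for Theorem~\ref{thm:Boost_connection_not_flat}''): work in the spherical frame, use Eq.~(\ref{eq:JL_theta_phi}) for the Jacobi--Lie term, and compute $[\DR_\theta,\DR_\phi]$ directly; your intermediate results $[\DR_\theta,\DR_\phi]=\tfrac{i}{|\svop{P}|^2}(\sop{J}_k+\cot\theta\,\sop{J}_\theta)$ and $\DR_{[\etheta,\ephi]}=\tfrac{i\cot\theta}{|\svop{P}|^2}\sop{J}_\theta$ match Eqs.~(\ref{eq:comm_DR_theta_DR_phi})--(\ref{eq:DR_JL_theta_phi}) exactly. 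Your reduction of $\DR_\etheta,\DR_\ephi$ to $\pm i\sop{J}_{\phi,\theta}/|\svop{P}|$ and the Leibniz split of $[\sop{J}_\phi,\sop{J}_\theta]$ into an algebraic piece plus $\kvec\times\nabla$ terms is a modestly cleaner bookkeeping than the Cartesian expansion implicit in the paper, but not a different argument.
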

\begin{proof}
    Using the second proof technique used for Theorem \ref{thm:Boost_connection_not_flat}, we find
    \begin{equation}\label{eq:comm_DR_theta_DR_phi}
        [\DR_\theta,\DR_\phi] = \frac{i}{|\svop{P}|^2}(\sop{J}_k + \cot \theta\, \sop{J}_\theta).
    \end{equation}
    Also,
    \begin{align}
        (\DR_{[\etheta,\ephi]}\psi)(\kvec) &= -\Big[\frac{\cot\theta}{|\svop{P}|}\DR_{\ephi}\Big]\psi(\kvec) = \Big[\frac{i\cot\theta}{|\svop{P}|^2}\ephi\cdot(\ek \times\svop{J})\psi\Big](\kvec) \\
        &= \Big[\frac{i\cot\theta}{|\svop{P}|^2}\sop{J}_\theta\psi\Big](\kvec)
    \end{align}
    so
    \begin{equation}\label{eq:DR_JL_theta_phi}
        \DR_{[\etheta,\ephi]} = \frac{i\cot\theta}{|\svop{P}|^2}\sop{J}_\theta.
    \end{equation}
    We thus obtain the nonzero $S^2$-curvature
    \begin{equation}
        F^R(\etheta,\ephi) = [\DR_\theta,\DR_\phi] - \DR_{[\etheta,\ephi]} = \frac{i\sop{J}_k}{|\svop{P}|^2}.
    \end{equation}
\end{proof}

\subsection{The spin-orbital decomposition for massive particles and its singularity in the massless limit}
We have seen that boosts and rotations each induce connections on particle bundles of arbitrary mass. However, neither of these connections are $S^2$-flat and thus neither gives an SAM-OAM decomposition or a path independent way to identify the internal states at different momenta. We observe though that the $S^2$-curvatures of these two connections, given in Eqs. (\ref{eq:boost_curvature}) and (\ref{eq:rotation_curvature}), are very similar, being proportional to $\sop{J}_k$ and differing only by a factor of
\begin{equation}
    \frac{|\svop{P}|^2}{\sop{H}^2} = \frac{\kmag^2}{\kmag^2+ m^2}.
\end{equation}
It is thus natural to ask if these connections can be combined in such a way that the curvatures cancel, producing an $S^2$-flat connection. We will see that, for massive particles, the answer is yes. A general result from the theory of connections indicates how to proceed. While the sum of two connections is never a connection, any \emph{affine} sum of connections is. That is,
\begin{equation}
    \Df \doteq f\DK + (1-f)\DR
\end{equation}
is a connection for any smooth $f \in C^\infty(M)$ (\cite{Tu2017differential}, Prop. 10.5). As we wish to construct a rotationally symmetric connection, we will assume that $f = f(\kmag)$ depends only on $\kmag$. The $S^2$-curvature of $\Df$ is given by
\begin{alignat}{2}
    F^f(\etheta,\etheta) &= [&&f\DK_\theta + (1-f)\DR_{\theta}, f\DK_\phi + (1-f)\DR_{\phi}] \nonumber\\
    & &&- f\DK_{[\etheta,\ephi]} - (1-f)\DR_{[\etheta,\ephi]} \\
    &=&&f^2[\DK_\theta,\DK_\phi] + (1-f)^2[\DR_\theta,\DR_\phi] \nonumber\\
    & &&+ f(1-f)\Big([\DK_\theta,\DR_\phi] + [\DR_\theta,\DK_\phi]\Big) \nonumber\\
    & &&- f\DK_{[\etheta,\ephi]} - (1-f)\DR_{[\etheta,\ephi]} \label{eq:F_f}.
\end{alignat}
We note here that we have made use of the fact that the connection terms $\mbf{D}^{K,R}_{\theta,\phi}$ commute with $\kmag$ and thus with $f$. This is only true because we are working with the spherical components of the connection; had we worked with Cartesian components this would not be the case and this calculation would be substantially more difficult, if not intractable. The commutators $[\DK_\theta,\DK_\phi]$ and $[\DR_\theta,\DR_\phi]$ were calculated in Eqs. (\ref{eq:comm_DK_theta_DK_phi}) and (\ref{eq:comm_DR_theta_DR_phi}) and the Jacobi-Lie terms $\DK_{[\etheta,\ephi]}$ and $\DK_{[\etheta,\ephi]}$ in Eqs. (\ref{eq:DK_JL_theta_phi}) and (\ref{eq:DR_JL_theta_phi}). We can calculate the remaining commutators using the same method:
\begin{align}
    [\DK_\theta,\DR_\phi] &= \frac{i\sop{J}_k}{|\svop{P}|^2} + \frac{i\cot \theta}{\sop{H}|\svop{P}|}K_\phi \\
    [\DR_\theta,\DK_\phi] &= \frac{i\sop{J}_k}{|\svop{P}|^2} + \frac{i\cot \theta}{|\svop{P}|^2}K_\phi.
\end{align}
Plugging all of these expressions into Eq. (\ref{eq:F_f}) gives 
\begin{equation}
    F^f(\etheta,\ephi) = \Big(\frac{f^2}{\sop{H}^2} + \frac{1-f^2}{|\svop{P}^2|}  \Big)\sop{J}_k.
\end{equation}
Thus, if we require $\Df$ to be $S^2$-flat so as to produce an SAM-OAM decomposition of $\svop{J}$, then we find two solutions
\begin{equation}
    f_\pm(\kmag) = \pm \frac{\sop{H}}{m} = \pm \frac{\sqrt{\kmag^2 + m^2}}{m}
\end{equation}
and we denote the corresponding connections by $\Dpm$. Only $f_+$ produces a non-singular connection at the origin. Indeed, \emph{a priori} $\Df$ is only a legitimate connection if we remove the origin $\kvec = 0$ due to the $\kmag^{-1}$ singularity in $\DR$. However, for $\kmag\ll 1$
\begin{equation}
    (1-f_+) \sim -\frac{\kmag^2}{2m^2},
\end{equation}
so that for small $\kmag$ 
\begin{equation}
    (1-f_+)\DR \sim -\frac{i}{2m^2}\Big(\svop{P}\times \svop{J} + \frac{1}{\sop{H}}\svop{P}(\svop{P}\cdot \svop{K})-\frac{i|\svop{P}|^2}{2\sop{H}^2}\svop{P}\Big)
\end{equation}
which is non-singular at $\kvec = 0$. On the other hand
\begin{equation}
    \lim_{\kmag \rightarrow 0}(1-f_-) = 2,
\end{equation}
showing that $\Dm$ is singular at the origin. Thus, only $\Dp$ gives a globally well-defined flat connection for massive particle bundles.

Plugging in $f_+$ we find that
\begin{align}
    \Dp &= \frac{\sop{H}}{m}\DK + \Big(1-\frac{\sop{H}}{m}\Big)\DR \label{eq:NW_simple}\\
    &=-\frac{i}{\sop{H}}(\svop{K}-i\svop{P}/2\sop{H}) + \frac{i}{\sop{H}m(\sop{H}+m)}\svop{P}\times (\sop{H}\svop{J} + \svop{P}\times\svop{J}) \label{eq:NW_complicated}
\end{align}
In this latter form, we find that $\mbf{Q}^{NW} = i\Dp$ is precisely the Newton-Wigner position operator in Eq.  (\ref{eq:NW_Jordan}). We see that we can recover this operator via a novel method by studying connections on massive particle bundles which produce well-defined SAM-OAM splittings. Our method helps explain the rather complicated expression for $\mbf{Q}^{NW}$: it corresponds to an affine sum of the two natural connections induced by \Poincare symmetry, namely the rotation and boost connections. The affine parameter $\sop{H}/m$ is the unique choice which produces a flat connection and, in turn, SAM and OAM operators which satisfy the angular momentum commutation relations.

Importantly, this derivation of the massive SAM-OAM operators gives a very clear picture of the singularity that occurs in the massless limit that prevents such an SAM-OAM decomposition for massless particles. We found that the boost and rotation connections $\DK$ and $\DR$ are anti-Hermitian connections which are naturally defined on particle bundles of any mass. Our method of deriving the flat connection for massive particles relied on the fact that these are two different connections with different curvatures, and thus by taking affine sums we could produce a family of connections with different curvatures. Among this family was a unique globally well-defined flat connection. However, it is a strange fact of relativity that rotations and boosts orthogonal to the momentum degenerate when $m=0$. In particular, we have (\cite{Weinberg1995}, Eq. (2.5.38); \cite{PalmerducaQin_PT}, Eq. (129))
\begin{align}
    \svop{J}_\perp &= -\phats \times \svop{K} \\
    \svop{K}_\perp &= \phats \times \svop{J}
\end{align}
where 
\begin{align}
        \svop{J}_\perp &\doteq \svop{J} - \phats(\phats \cdot \svop{J}) \\
        \svop{K}_\perp &\doteq \svop{K} - \phats(\phats \cdot \svop{K}).
\end{align}
This is a nonobvious relationship which originates from imposing the condition that massless particles have a finite number of internal degrees of freedom. It would not hold for massless particles with continuous spin, but such particles have never been observed and are not part of the Standard Model \cite{Weinberg1995}. This condition says that, infinitesimally, boosting in some direction $\vhat$ perpendicular to $\khat$ is equivalent to rotating in the plane containing $\vhat$ and $\khat$. Thus, parallel transporting a vector via boosts or rotations will give the same result. Indeed, for massless particles we have $\sop{H} = |\svop{P}|$ and
\begin{equation}
    \svop{K} = \phats(\phats \cdot \svop{K})+ \phats \times \svop{J}
\end{equation}
as shown in Eq. (\ref{eq:K_decomp}). Plugging these into Eq. (\ref{eq:DK_b}) and comparing with Eq. (\ref{eq:DR_1}) shows that the boost and rotation connections degenerate for massless particles:
\begin{equation}
    \DK = \DR.
\end{equation}
There is now a single non-flat connection induced by \Poincare symmetry and one cannot produce a non-flat connection via an affine combination of $\DK$ and $\DR$. Thus, the lack of an SAM-OAM decomposition for massless particles can be traced back to the degeneracy of transverse boosts and rotations.

However, this also shows that there is a theoretical origin of the massless splitting in Eq. (\ref{eq:massless_splitting}). The most naive splitting of the massless angular momentum leads to operators which are not well-defined as they violate the transversality of EM waves \cite{Akhiezer1965}. The $\mbf{J}_\parallel$, $\mbf{J}_{\perp}$ split was an ad hoc fix, solving the transversality issue, but leading to operators which are not legitimate SAM-OAM operators. However, here we show see that these operators naturally arise as the unique splitting induced by Poincare symmetry, and belongs to the same class of splittings as the well-defined SAM-OAM splitting for massive particles.

\section{Conclusion}
We have shown that any rotationally symmetric connection on particle bundles induces a splitting of the angular momentum operator into two parts $\SDbold$ and $\LDbold$. $\SDbold$ is always internal and $\LDbold$ has the form $-i\kvec \times \cx$, and in this sense they \emph{resemble} spin and orbital angular momentum operators. However, the defining feature of angular momentum operators is that they satisfy angular momentum commutation relations, and we find that $\SDbold$ and $\LDbold$ do so if and only if the $S^2$-curvature of $\mbf{D}$ vanishes. \Poincare symmetry induces two connection, corresponding to parallel transport via boosts or rotations. For massive particles these are different connections and there is a unique affine combination of them which produces a well-defined SAM-OAM splitting. The result agrees with the SAM-OAM operators for relativistic massive particles resulting from the Newton-Wigner position operator. The method presented arguably gives a more intuitive derivation of these rather complicated operators. For massless particles, however, we find that the connections $\DR$ and $\DK$ degenerate, and it is not possible to combine them to produce a flat connection. This gives a geometric picture of why the massless limit obstructs an SAM-OAM splitting for massless particles such as photons and gravitons. Nevertheless, the degeneracy of $\DR$ and $\DK$ means that for massless particles there is a unique splitting of $\mbf{J}$ induced by \Poincare symmetry, namely the $\mbf{J}_\parallel$, $\mbf{J}_{\perp}$ splitting. This splitting has been previously suggested as an ad hoc fix for the SAM-OAM decomposition problem for photons, but it was difficult to interpret or justify since neither $\mbf{J}_\parallel$ nor $\mbf{J}_{\perp}$ are actually angular momentum operators. Here we show the theoretical origins of these operators.

We suspect that some of our findings can be extended to electromagnetic waves in plasmas and other media, as we will describe heuristically. Introducing a homogeneous isotropic medium introduces a preferred frame, namely the rest frame of the medium. One can always solve for the wave solutions in this rest frame, and doing so removes the boost symmetry of the solution space. The solutions will then only have a rotational $\SO(3)$ and translational $\Real^{3+1}$ spacetime symmetry, regardless of whether one chooses to work relativistically or non-relativistically. Thus, the only symmetry-induced connection is the rotation connection which, for transverse waves, is curved. As such, the rotation connection will induce a splitting of the total angular momentum into operators which are not true angular momentum operators. By this reasoning, we conjecture that most homogeneous isotropic media, e.g. plasmas, will not allow an SAM-OAM decomposition for EM waves. Rigorous analysis is needed to verify this.

\begin{acknowledgments}
This work is supported by U.S. Department of Energy (DE-AC02-09CH11466).
\end{acknowledgments}

\appendix
\section{Noncommutativity of the \Poincare generators}\label{app:Appendix}
The \Poincare generators satisfy
\begin{subequations}\label{eq:Poincare_algebra}
\begin{align}
    [J_a,J_b] &= i\epsilon_{abc}J_c \,, \;\;  [J_a, K_b] = i\epsilon_{abc} K_c \\
    [K_a,K_b] &= -i\epsilon_{abc}J_c \,, \;\; [J_a,P_b] = i\epsilon_{abc}P_c \\
    [K_a,P_b] &= iH\delta_{ab} \,, \;\; [K_a,H] = iP_a \\
    [J_a,H] &= [P_a,H] = [P_a,P_b] = [H,H] = 0.
\end{align}
\end{subequations}
In this appendix we give some useful results pertaining to the noncommutativity of the generators.

\subsection{Noncommutative BAC-ABC rule}
Ordinary 3D vectors $\mbf{a},\mbf{b},\mbf{c}$ satisfy the well-known BAC-CAB rule
\begin{equation}
    \mbf{a} \times (\mbf{b} \times \mbf{c}) = \mbf{b}(\mbf{a}\cdot \mbf{c}) - \mbf{c}(\mbf{a}\cdot \mbf{b}).
\end{equation}
However, this relationship does not hold for general vector operators $\mbf{A}$, $\mbf{B}$, $\mbf{C}$ which fail to commute. If $\mbf{A}$ and $\mbf{B}$ commute, then we have the following useful relationship
\begin{proposition}[Noncommutative BAC-ABC relation]
    Let $\mbf{A}$, $\mbf{B}$, $\mbf{C}$ be vector operators such that $\mbf{A}$ and $\mbf{B}$ commute: $[A_m,B_n]=0$. Then
    \begin{equation}
        \mbf{A} \times (\mbf{B} \times \mbf{C}) = \mbf{B}(\mbf{A} \cdot \mbf{C}) - (\mbf{A} \cdot \mbf{B})\mbf{C}
    \end{equation}
\end{proposition}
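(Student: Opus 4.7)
The plan is to prove this by brute-force index computation, being careful to keep track of the operator ordering at every step since $\mbf{A}$, $\mbf{B}$, $\mbf{C}$ need not mutually commute in general. The hypothesis $[A_m, B_n] = 0$ will be used exactly once, at the end, to swap the order of a single pair of components.

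First I would write out the $m$-th Cartesian component using the Levi-Civita symbol:
\begin{equation}
    [\mbf{A} \times (\mbf{B}\times\mbf{C})]_m = \epsilon_{mab}A_a (\mbf{B}\times\mbf{C})_b = \epsilon_{mab}\epsilon_{bcd} A_a B_c C_d,
\end{equation}
taking care to preserve the left-to-right order $A_a B_c C_d$ since these operators do not a priori commute with each other. Next I would apply the standard contraction identity $\epsilon_{mab}\epsilon_{bcd} = \delta_{mc}\delta_{ad} - \delta_{md}\delta_{ac}$, which is purely combinatorial and thus valid irrespective of whether the components commute. This collapses the expression to
\begin{equation}
    A_a B_m C_a - A_a B_a C_m.
\end{equation}

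The second term is already in the desired form, namely $(\mbf{A}\cdot\mbf{B})\mbf{C}$ evaluated at index $m$. For the first term I would invoke the hypothesis $[A_a, B_m] = 0$ to commute $A_a$ past $B_m$, obtaining $B_m A_a C_a = B_m(\mbf{A}\cdot\mbf{C})$. Assembling both contributions gives
\begin{equation}
    [\mbf{A}\times(\mbf{B}\times\mbf{C})]_m = B_m(\mbf{A}\cdot\mbf{C}) - (\mbf{A}\cdot\mbf{B})C_m,
\end{equation}
which is the $m$-th component of the claimed identity. Since $m$ was arbitrary, this establishes the result.

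The only real subtlety, and the step I would be most careful about, is the ordering issue in the first term: one must use commutativity of $\mbf{A}$ and $\mbf{B}$ (and not of, say, $\mbf{A}$ with $\mbf{C}$) to move $B_m$ to the front. Without that hypothesis, $A_a B_m C_a$ would generally not reduce to $\mbf{B}(\mbf{A}\cdot\mbf{C})$, and additional commutator terms of the form $[A_a,B_m]C_a$ would appear on the right-hand side. No analogous manipulation is needed for the second term because $(\mbf{A}\cdot\mbf{B})\mbf{C}$ already has the operators in the same order as they appeared after the Levi-Civita contraction.
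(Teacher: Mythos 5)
Your proof is correct and follows essentially the same route as the paper: expand in Cartesian components, apply the $\epsilon$-$\delta$ contraction identity, and invoke $[A_a,B_m]=0$ once to re-order the first term. You are slightly more explicit than the paper about precisely where the commutativity hypothesis enters, which is the right thing to flag, but the argument is the same.
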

\begin{proof}
    From the relation 
    \begin{equation}
        \epsilon_{ijk}\epsilon_{mnk} = \delta_{im}\delta_{jn} - \delta_{in}\delta_{jm}
    \end{equation}
    we obtain
    \begin{align}
        [\mbf{A} \times (\mbf{B} \times \mbf{C})]_i &= \epsilon_{ijk} A_j\epsilon_{kmn}B_mC_n \\
        &= (\delta_{im}\delta_{jn} - \delta_{in}\delta_{jm})A_jB_mC_n \\
        &=A_jB_iC_j - A_jB_jC_i \\
        &= \mbf{B}(\mbf{A} \cdot \mbf{C}) - (\mbf{A}\cdot\mbf{B})\mbf{C}
    \end{align}
\end{proof}
This is useful for decomposing a vector operator, such as $\svop{J}$, into parallel and perpendicular parts with respect to another such as $\phats$. In particular, since $\phats \cdot \phats = 1$, applying the above result with $\mbf{A} = \mbf{B} = \phats$ and $\mbf{C}= \svop{J}$ gives
\begin{align}
    \svop{J} &= \phats(\phats \cdot \svop{J}) - \phats \times (\phats \times \svop{J}) \\
    &\doteq \svop{J}_\parallel + \svop{J}_\perp
\end{align}
where
\begin{subequations}
\begin{align}
    \svop{J}_\parallel &\doteq \phats(\phats \cdot \svop{J}) \\
    \svop{J}_\perp &\doteq - \phats \times (\phats \times \svop{J}).
\end{align}
\end{subequations}
Similarly,
\begin{subequations}\label{eq:K_decomp}
\begin{align}
    \svop{K} &= \svop{K}_\parallel + \svop{K}_\perp \\
    \svop{K}_\parallel &\doteq \phats(\phats \cdot \svop{K}) \\
    \svop{K}_\perp &\doteq - \phats \times (\phats \times \svop{K}).    
\end{align}
\end{subequations}

\subsection{Useful commutation relations}
The boost generators $\svop{K}$ satisfy more complicated commutation relations than the other \Poincare generators. Indeed, unlike the momentum operators, the boosts do not commute with $H$, nor do different components of the boost commute with each other. This gives rise to interesting physics, such as the Wigner rotation, but it also greatly complicates calculations. In this section we collect a number of useful commutation relations involving the \Poincare generators, mostly involving $\svop{K}$. 

If the operator $B^{-1}$ is defined, then
\begin{align}
    [A,B^{-1}] &= AB^{-1} - B^{-1}A \nonumber\\
    &= B^{-1}BAB^{-1} - B^{-1}ABB^{-1} \nonumber\\
    &= -B^{-1}[A,B]B^{-1}.\label{eq:inverse_comm}
\end{align}
If $\big[[A,B],B\big] = 0$, then for non-negative $n$: (\cite{Zettili2009}, Eq. 2.90)
\begin{equation}
    [A,B^n] = n[A,B]B^{n-1}. \label{eq:power_comm}
\end{equation}
If $B^{-1}$ exists, then Eq. (\ref{eq:power_comm}) holds for all integers $n$ by Eq. (\ref{eq:inverse_comm}). Using the \Poincare commutation relations and Eq. (\ref{eq:power_comm}) we obtain the following:
\begin{subequations}
\begin{align}
    &[K_a,H^n] = inP_aH^{n-1} \label{eq:comm_Ka_hn}\\
    &[K_a,|\mbf{P}|^n] = inHP_a|\mbf{P}|^{n-2} \label{eq:comm_Ka_P}\\
    &[(\mbf{P}\cdot \mbf{K}),P_a] = iHP_a  \\
    &[\hat{P}_a, K_b] = -\frac{i\delta_{ab}H}{|\mbf{P}|} + \frac{iHP_aP_b}{|\mbf{P}|^3} \\
    &[H^mK_a, H^nP_b] = iH^{m+n+1}\delta_{ab} + inP_aP_bH^{m+n-1} \label{eq:comm_id_hm_k_hn_p} \\
    &\hat{\mbf{P}}\cdot \mbf{K} - \mbf{K}\cdot \hat{\mbf{P}} = -\frac{2iH}{|\mbf{P}|} \\
    &\mbf{P} \cdot \mbf{K} - \mbf{K} \cdot\mbf{P} = -3iH.
\end{align}
\end{subequations}

\bibliography{connection}
\end{document}